\newtheorem{theorem}{\bf{Theorem}}
\newtheorem{condition}{\bf{Assumption}}
\newtheorem{definition}{\bf{Definition}}
\newtheorem{lemma}{\bf{Lemma}}
\newtheorem{problem}{\bf{Problem}}
\newtheorem{proposition}{\bf{Proposition}}
\newtheorem{remark}{\bf{Remark}}
\begin{document} 
%Towards Safe and Reliable 	 
\title{\Large Resilient Time-Varying Output Formation Tracking of Linear Multi-Agent Systems Against Unbounded FDI Sensor Attacks and Unreliable Digraphs} 
 
\author
{Zhi Feng and Guoqiang Hu

\thanks{
This work was supported by Singapore Ministry of Education Academic Research Fund Tier 1 RG180/17(2017-T1-002-158) and in part by the Wallenberg-NTU Presidential Postdoctoral Fellow Start-Up Grant. Z. Feng and G. Hu are with the School of Electrical and Electronic Engineering, Nanyang Technological University, Singapore 639798 (E-mail: zhifeng@ntu.edu.sg; gqhu@ntu.edu.sg).  
%The materials in this paper are not presented in any conferences. 
}  
} 
 
%\markboth{IEEE TRANSACTIONS ON XXXX, VOL. XX, NO. XX, PP. XX, 2016}
%{Shell \MakeLowercase{\textit{et al.}}: Bare Demo of IEEEtran.cls for Journals}
 
\maketitle 
 
\begin{abstract}
One salient feature of cooperative formation tracking is its distributed nature that relies on localized control and information sharing over a sparse communication network. That is, a distributed control manner could be prone to malicious attacks and unreliable communication that deteriorate the formation tracking performance or even destabilize the whole multi-agent system. 
This paper studies a safe and reliable time-varying output formation tracking problem of linear multi-agent systems, where an attacker adversely injects any unbounded time-varying signals (false data injection (FDI) attacks), while an interruption of communication channels between the agents is caused by an unreliable network. Both characteristics improve the practical relevance of the problem to be addressed, which poses some technical challenges to the distributed algorithm design and stability analysis. 
%To the best of our knowledge, this paper is the first attempt to address this issue.  
To mitigate the adverse effect, a novel resilient distributed control architecture is established to guarantee time-varying output formation tracking exponentially. The key features of the proposed framework are threefold: 1) an observer-based identifier is integrated to compensate for  adverse effects; 2) a reliable distributed algorithm is proposed to deal with time-varying topologies caused by unreliable communication; and 3) in contrast to the existing remedies that deal with attacks as bounded disturbances/faults with known knowledge, we propose resilience strategies to handle unknown and unbounded attacks for exponential convergence of dynamic  formation tracking errors, whereas most of existing results achieve uniformly ultimately boundedness (UUB) results. Numerical simulations are given to show the effectiveness of the proposed design.
\end{abstract} 

\vspace*{-4pt} 
\begin{IEEEkeywords}
Dynamic output formation tracking, Heterogeneous linear multi-agent system, Unbounded attacks, Unreliable digraphs, Resilient distributed control, Global exponential convergence. 
\end{IEEEkeywords}

\IEEEpeerreviewmaketitle

\vspace*{-10pt}
\section{Introduction}
\vspace*{-3pt}
Formation control of the multi-agent system has attracted considerable attention in recent years due to its potential applications such as cooperative localization \cite{AI06TRO}, surveillance \cite{Nigam12TCST} and moving target enclosing \cite{Sun15AT}. The main objective of formation control is to present distributed control protocols via neighboring interactions such that the states of all agents can form a desired configuration. In addition to time-invariant formation control, the formation of a team of agents in many practical applications (target enclosing or source seeking) often changes, and is required to track a desired reference. Thus, a time-varying formation tracking problem arises where the agent team can maintain a time-varying formation and meanwhile, track a reference. Based on neighboring interactions, time-varying formation tracking is investigated for homogeneous multi-agent systems with single-/double-integrator \cite{AI06TRO}, high-order linear \cite{Dong17TAC, Dong16AT} and nonholonomic \cite{Sun15AT} dynamics. In practice, each agent has different dynamics and dimensions. That is, the exiting approaches in \cite{AI06TRO,Nigam12TCST,Sun15AT,Dong17TAC,Dong16AT} cannot be directly applied for heterogeneous systems. Recently, formation tracking of heterogeneous systems is studied in \cite{Guan17TIE,Wang18AT,Dong19TAC} via an output regulation scheme. 

Notice that all aforementioned works relied on the availability of the local healthy sensors of each agent of the team associated with a reliable communication network. However, the multi-agent systems involving the communication and collaboration  between connected agents, are prone to malicious cyber-attacks such as the DoS attacks, deception attacks (FDI attacks or replay attacks), and disclosure attacks  \cite{Sun11TAC,Bullo12TAC,Feron16ACC,Hu15Cyber,Hu15IJNRC,Hu19TCST,Chen18Tcyber,Zhang19Tcyber}. In this paper, we focus on these FDI attacks on the sensors by injecting any false signals to manipulate sensor measurements. Moreover, each agent can communicate via an unreliable network. Note that both characteristics may severely affect the performance of the system, prohibit the accomplishment of system-level objectives, and even destabilize the whole multi-agent system. In light of a wide application of distributed control schemes in a cyber-physical system   (safety-critical), and inspired by the studies of security issues in many existing works  \cite{Sun11TAC,Bullo12TAC,Feron16ACC,Hu15Cyber,Hu15IJNRC,Hu19TCST,Chen18Tcyber,Zhang19Tcyber}, it is desirable to see if the distributed formation tracking algorithm can provide certain resilience against FDI attacks over unreliable communication. Hence, the objective of this paper is to address a safe and reliable time-varying output formation tracking problem of heterogeneous linear multi-agents to provide certain resilience against unbounded FDI attacks and unreliable communication. So far, this problem is still open, and to the best of our knowledge, few efforts are made on this issue.
 
\vspace{0.3pt}   
This paper investigates resilient time-varying output formation tracking problems of heterogeneous linear multi-agent system in the presence of unbounded FDI attacks and unreliable networks. The resilient distributed estimator-based control algorithms without requiring any attack information, are proposed to deal with the problem over directed topologies caused by unreliable communications. The previous works in   \cite{Hu15Cyber,Hu15IJNRC,Hu19TCST,Zhang19Tcyber} studied secure consensus only for homogeneous multi-agent system under DoS attacks. The designs in \cite{Sun11TAC,Bullo12TAC,Feron16ACC} and \cite{Hu19AT,Chen19AT,Lewis15IE,Lewis20Tcyber,Lewis20TAC} to deal with the sensor/actuator attacks or faults highly depended on some assumptions that those attacks or faults either satisfy some special structures or are upper bounded with known knowledge. Those aforementioned works do not consider malicious FDI attacks that are rational, unknown and unbounded. As compared to those works, the main contributions of this paper can be summarized as follows.      

\vspace{-0.5pt}
\begin{itemize}
\item A resilient distributed estimator-based control framework is developed for heterogeneous linear multi-agent systems with sensor attacks and unreliable digraphs. A reliable distributed leader estimator is firstly proposed to reconstruct  the leader's state for each agent over unreliable communication. Further, a novel resilient distributed output feedback control algorithm is designed to achieve global exponential convergence of the proposed algorithm such that the output of each follower can not only maintain the prescribed time-varying formation, but also track the output of the leader's trajectory. 

%\vspace*{0.5pt}	
\item In contrast to works in \cite{Sun11TAC,Bullo12TAC,Feron16ACC} in which resilient
function calculation and consensus were investigated under the constraints on the number of malicious agents or certain special structure attacks, those requirements are not required in this work. Instead, we only suppose that information interaction over the unreliable communication is allowed to be switching between a graph set containing a directed spanning tree and a paralyzed graph set (in Assumption \ref{SwitchingGraphs}), which is less mild than the fixed undirected or directed graphs in \cite{Dong17TAC,Dong16AT,Guan17TIE,Wang18AT,Dong19TAC,Sun11TAC,Bullo12TAC,Feron16ACC}. 
%The underlying Laplacian matrices of each digraph are not necessarily positive definite;. 

\vspace*{0.5pt}		
\item Compared with existing works in \cite{Hu19AT,Chen19AT,Lewis15IE,Lewis20Tcyber,Lewis20TAC} to handle attacks or faults as disturbances that have to be bounded with a prior knowledge, the  malicious FDI sensor attacks in this work are intelligent, unknown and unbounded, which are practical and reasonable. The proposed scheme does not require any attack information. Moreover, a global exponential convergence can be achieved under attacks, while only the uniformly ultimate boundedness (UUB) result is obtained in \cite{Chen19AT,Lewis20Tcyber,Lewis20TAC}. 

%\vspace*{1pt}
\item The novel resilient distributed output feedback control architecture enables the global exponential stability of the system for time-varying output containment-formation tracking with multiple leaders. Local sufficient conditions and design procedures are presented via the Lyapunov analysis.
\end{itemize}

%\vspace*{1pt}
The rest of this paper is organized below. The preliminaries and problem formulation are provided in Section II. The resilient time-varying output formation tracking results are proposed in Section III with the global exponential convergence analysis. The design is further extended for resilient time-varying output containment-formation tracking with multiple leaders in Section IV. Examples and numerical simulation results are given in Section V, followed by the conclusion in Section VI.

\vspace*{-4pt} 
\section{Preliminaries and Problem Formulation} 
\vspace*{-1pt} 
\subsection{Mathematical Preliminaries} 
\vspace*{-1pt} 
\textit{Notation:} let $\mathbb{R}$, $\mathbb{R}^{n}$ and $\mathbb{R}^{n\times m}$ be the sets of the real numbers, real $n$-dimensional vectors and real $n\times m$ matrices, respectively. Let $ \mathbb{R}_{>0} $ be the set of all positive real numbers and $ \mathbb{N} $ denote the set of all positive natural numbers. Let $\textbf{0} $ ($\textbf{1}$) be the vector with all zeros (ones) with proper dimensions, respectively. Denote col$(x_{1},...,x_{n})$ and diag$\{a_{1},...,a_{n}\}$ as a column vector with all entries $x_{i}$ and a diagonal matrix with all entries $a_{i}$, $i=1,\cdots ,n$, respectively. $\otimes $ and $\left\Vert \cdot \right\Vert $ are the Kronecker product and Euclidean norm, respectively. Given a real matrix $M=M^{T}$, let $ M > 0 $ be positive definite. Let $\lambda _{\min }(M)$, $\lambda _{\max }(M)$ be its minimum and maximum eigenvalues, respectively. 
%For any matrices $ T_{i} \in  \mathbb{R}^{p_{i}\times q_{i}} $, $ i=1,\cdots,N $, $ \text{vec}[T_{i}] \in \mathbb{R}^{p_{i}\times q_{i}} $ denotes the stack vector of $T_{i}$ and $ \text{diag}\{T_{i} \} =\text{blkdiag}\{T_{1},\cdots,T_{N} \} $ denotes their block diagonal matrices. 
%$\sigma _{\min}(\cdot )$ and 
Besides, $\sigma _{\max}(M)$ represents the maximum singular value of a matrix $ M $. 
 
\vspace*{1pt}
\textit{Graph Theory:} Let $\mathcal{G}$ $=$ $\left\{ \mathcal{V},\mathcal{E}\right\} $ be a graph and $\mathcal{V}$ $\in $ $\left\{ 1,...,N\right\} $ be the set of vertices. The set of edges is denoted as $\mathcal{E}$ $\subseteq$ $ \mathcal{V\times V}$. 
%In this paper, we assume that there is no self loops in the graph, that is, $(i,i)\notin \mathcal{E}$. 
$\mathcal{N}_{i}$ $=$ $%
\left\{ j\in \mathcal{V\mid }(j,i)\in \mathcal{E}\right\} $ is the neighborhood set of vertex $i$. For a directed graph $\mathcal{G}$, $(i,j)\in \mathcal{E}$ means that the
information of node $ i $ is accessible to node $ j $, but not conversely.  The matrix $\mathcal{A}=\left[ a_{ij}\right] $ 
%$\in\mathbb{R}^{N\times N}$ 
denotes the adjacency matrix of $\mathcal{G}$, where $a_{ij}>0$ if 
$(j,i)\in \mathcal{E}$, else $a_{ij}=0$. The matrix $\mathcal{L}=[l_{ij}] $ is called the Laplacian matrix of $\mathcal{G}$, where $ l_{ii}=\sum^{N}_{j=1}a_{ij} $ and $ l_{ij}= -a_{ij}$, $ i \neq j$. 
%Denote $\mathcal{L} \triangleq D-A$, where $D=\text{diag}\{d_{1},\cdots,d_{N} \} $ with $d_{i}=\sum\nolimits_{j=1}^{N}a_{ij}$. 
% $\in\mathbb{R}^{N\times N}$. 
The digraph $\mathcal{G}$ is said to contain a spanning tree if there exists a node from which there are directed paths to all other nodes. 
For a general directed graph, $ \mathcal{L} $ is not necessarily symmetric.  
Let $\mathcal{\bar{G}} = (\mathcal{\bar{V}}, \mathcal{\bar{E}}) $ be a directed graph of a leader-follower network, where  $\bar{\mathcal{E}} \subseteq \bar{\mathcal{V}} \times \bar{\mathcal{V}} $, $ \mathcal{\bar{V}}=\{0,\cdots,N\} $, and the node $ 0 $ is associated with the leader. Then, $ \bar{\mathcal{N}}_{i}= \{ j\neq i, (j, i)\in  \mathcal{\bar{E}} \} $ is the neighbor set of node $ i $. Clearly, $\mathcal{G}$ is a subgraph of $\mathcal{\bar{G}}$, where $ \mathcal{E} $ is obtained from $ \mathcal{\bar{E}}$ by removing all the edges between the node $0$ and the nodes in $ \bar{\mathcal{V}} $. Define the Laplacian matrix of $\mathcal{\bar{G}}$ as $\mathcal{\bar{L}}= [0, \mathbf{0}^{T}; -\mathcal{B}\mathbf{1}, H]$, 
% $\mathcal{\bar{L}}=\left( \begin{array}{cc} 0_{1\times 1} & 0_{1\times N} \\
%-\mathcal{B}1_{N} & H \end{array} \right),$ 
where $\mathcal{B}$ is a diagonal matrix with its $i$-th diagonal element being $a_{i0}$, 
($a_{i0}>0$, if $(0,i) \in \mathcal{\bar{E}}$, and $a_{i0}=0$, otherwise), 
and $ H=
%[h_{ij}] \triangleq 
\mathcal{L}+\mathcal{B} $ is an information exchange matrix.

\vspace*{1pt}
\begin{lemma}\label{definedlemma} \cite{Hu16TCNS1} Suppose that the graph $\mathcal{\bar{G}}$ contains a directed spanning tree with the leader as the root, then $H$ is a non-singular and positive definite matrix. Further, there exists a diagonal matrix $Q=\text{diag}(q_{1},q_{2},\cdots ,q_{N})$ such that $\Omega=QH+H^{T}Q$ is symmetric and positive definite, where $q=\text{col} (q_{1},q_{2},...,q_{N})=H^{-T}\mathbf{1}$. 
\end{lemma}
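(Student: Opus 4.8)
The plan is to split the claim into two independent assertions: the nonsingularity and positive stability of $H$, and the existence of the diagonal weighting $Q$ for which $\Omega=QH+H^{T}Q$ is symmetric positive definite. For the first part, I would exploit the block-triangular structure $\mathcal{\bar{L}}=[0,\mathbf{0}^{T};-\mathcal{B}\mathbf{1},H]$, which gives $\det(\lambda I-\mathcal{\bar{L}})=\lambda\det(\lambda I-H)$, so the spectrum of $\mathcal{\bar{L}}$ is $\{0\}$ together with the eigenvalues of $H$. Since $\mathcal{\bar{G}}$ contains a directed spanning tree rooted at the leader, the standard Laplacian spectral result guarantees that $0$ is a simple eigenvalue of $\mathcal{\bar{L}}$ and every remaining eigenvalue has positive real part; as the single zero is already supplied by the leader's top-left block, no eigenvalue of $H$ can vanish, so $H$ is nonsingular and positive stable. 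Reading off the sign pattern of $H=\mathcal{L}+\mathcal{B}$ (nonpositive off-diagonal entries $-a_{ij}$ and nonnegative diagonal) then identifies $H$ as a nonsingular $M$-matrix.

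For the second part, I would first invoke the inverse-positivity of nonsingular $M$-matrices to conclude that $H^{-1}\geq 0$ entrywise. Setting $q=H^{-T}\mathbf{1}$, each component $q_{i}$ equals the $i$-th column sum of $H^{-1}$; because $H^{-1}$ is nonsingular and nonnegative it has no zero column, so $q_{i}>0$ and $Q=\mathrm{diag}(q_{1},\dots,q_{N})>0$. By construction $\Omega=QH+H^{T}Q$ is symmetric with entries $\Omega_{ij}=q_{i}H_{ij}+q_{j}H_{ji}$, and the off-diagonal entries $-q_{i}a_{ij}-q_{j}a_{ji}\leq 0$ show that $\Omega$ is a symmetric $Z$-matrix.

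The crux is then a row-sum computation. Using that the row sums of $H$ equal the pinning gains, $\sum_{j}H_{ij}=a_{i0}$, together with $q^{T}H=\mathbf{1}^{T}$ (the transpose of the defining relation $H^{T}q=\mathbf{1}$), the $i$-th row sum of $\Omega$ is $q_{i}\sum_{j}H_{ij}+\sum_{j}q_{j}H_{ji}=q_{i}a_{i0}+1\geq 1>0$. A symmetric $Z$-matrix with strictly positive row sums is strictly diagonally dominant with positive diagonal, hence positive definite by the Gershgorin disc theorem, which yields $\Omega>0$.

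I expect the first part to be the main obstacle, since pinning down that $H$ inherits positive stability and the inverse-positivity $H^{-1}\geq 0$ from the spanning-tree hypothesis rests on the spectral theory of digraph Laplacians and the equivalent characterizations of nonsingular $M$-matrices rather than on elementary manipulation. I would also flag that, for a general digraph, $H$ is positive stable (a nonsingular $M$-matrix) rather than literally symmetric positive definite; the genuinely symmetric positive definite object produced here is $\Omega$, and the weighting $q=H^{-T}\mathbf{1}$ is precisely the device that converts the one-sided stability of $H$ into a two-sided Lyapunov-type inequality usable in the subsequent convergence analysis.
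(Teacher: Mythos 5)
Your proof is correct, and it is essentially self-contained. Note that the paper itself does not prove this lemma at all: it is imported by citation from \cite{Hu16TCNS1}, so there is no in-paper argument to compare against. Your route --- the block-triangular factorization $\det(\lambda I-\mathcal{\bar{L}})=\lambda\det(\lambda I-H)$ combined with the simple-zero-eigenvalue property of digraph Laplacians under the spanning-tree hypothesis, the identification of $H$ as a nonsingular $M$-matrix from its $Z$-sign pattern, inverse-positivity to get $q=H^{-T}\mathbf{1}>0$, and the row-sum computation $q_{i}a_{i0}+1>0$ making $\Omega$ a strictly diagonally dominant symmetric $Z$-matrix --- is precisely the standard argument underlying the cited result, so every step is sound and nothing is missing. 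Your closing caveat is also well taken and worth keeping: for a general digraph the correct reading of ``$H$ is positive definite'' is positive stability (all eigenvalues in the open right half-plane), not positive definiteness of the symmetric part, which can fail for nonsingular $M$-matrices; the genuinely symmetric positive definite object the lemma delivers is $\Omega$, and that is exactly what the paper's Lyapunov arguments (e.g., Lemma \ref{QswitchingLemma} and Theorem \ref{theorem1}) actually use.
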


%\vspace*{-3pt}   
%\section{Problem Formulation}
\vspace*{-2pt} 
\subsection{Heterogeneous Linear Multi-Agent Systems}
%\vspace*{-1pt}
Consider a multi-agent system consisting of $ N $ followers governed by the following heterogeneous linear dynamics:
\vspace*{-2pt}    
\begin{equation}  
\dot{x}_{i}(t)=A_{i}x_{i}(t)+B_{i}u_{i}(t), \  y_{i}(t)=C_{i}x_{i}(t), \  i \in \mathcal{V}
\label{FollowerDynamics} 
\end{equation}
where $ x_{i} \in \mathbb{R}^{n_{i}} $ denotes the state of agent $i $, $u_{i} \in \mathbb{R}^{m_{i}} $ denotes the control input to agent $i$, $ y_{i} \in \mathbb{R}^{p} $ is its output, and $ A_{i} \in \mathbb{R}^{n_{i} \times n_{i}}$, $B_{i} \in \mathbb{R}^{n_{i} \times m_{i}}$,  $C_{i} \in \mathbb{R}^{p \times n_{i}}$ are constant system matrices. 
%Suppose that $ (A_{i},B_{i}) $ is stabilizable  and $ (A_{i},C_{i}) $ is detectable. 

The dynamics of the leader are described by 
\vspace*{-2pt}    
\begin{equation}  
\dot{x}_{0}(t)=A_{0}x_{0}(t), \  y_{0}(t)=C_{0}x_{0}(t),
\label{LeaderDynamics} 
\end{equation} 
where $ x_{0} \in \mathbb{R}^{r} $ is the leader's state, $ y_{0} \in \mathbb{R}^{p} $ is its measurable output, and $ A_{0} \in \mathbb{R}^{r \times r}, C_{0} \in \mathbb{R}^{p\times r} $ are constant matrices.

\vspace*{2pt}
For this leader-follower system, suppose that the pair $ (A_{i},B_{i}) $ is stabilizable and $ (A_{0},C_{0}) $ is detectable. Moreover, the following linear matrix equation has a solution $ (X_{i}, U_{i}) $ for each agent $ i $  
\vspace{-2pt} 
\begin{equation}
\left\{ 
\begin{array}{c}
\hspace{-0.5em}
X_{i}A_{0}=A_{i}X_{i}+B_{i}U_{i}, \\
%\hspace{0.2em} 
C_{0}=C_{i}X_{i}, i \in \mathcal{V}.
\end{array}
\right.
\label{SystemDynamicRegulatedEquation}
\end{equation}
 
To specify the desired time-varying output formation tracking, a time-varying vector $ h(t)=\text{col}(h_{1}(t),\cdots,h_{N}(t))$ is introduced, where each $ h_{i}(t) $ is generated by
\vspace*{-2pt}    
\begin{equation}  
\dot{h}_{i}(t)=A_{hi}h_{i}(t), \ y_{hi}(t)=C_{hi}h_{i}(t), \ i \in \mathcal{V},
\label{FormationDynamics} 
\end{equation}
and the pair $(A_{hi},C_{hi}) $ satisfies
\vspace{-2pt} 
\begin{equation}
\left\{ 
\begin{array}{c}
\hspace{-0.5em}
X_{hi}A_{hi}=A_{i}X_{hi}+B_{i}U_{hi}, \\
%\hspace{0.2em} 
C_{hi}=C_{i}X_{hi},   i \in \mathcal{V},
\end{array}
\right.
\label{FormationRegulatorEquation} 
\end{equation}
where $(X_{hi},U_{hi})$ is the solution of (\ref{FormationRegulatorEquation}) for a formation shape (\ref{FormationDynamics}).

\vspace*{3pt}
\begin{remark}
As compared to homogeneous multi-agent systems with identical dynamics and/or time-invariant formation in \cite{AI06TRO,Nigam12TCST,Sun15AT,Dong17TAC,Dong16AT}, it can be seen from (\ref{FollowerDynamics})-(\ref{FormationDynamics}) that the linear dynamics of each agent can be heterogeneous in the aspects of both parameters and dimensions, and the desired formation is time-varying. In an output formation tracking of multi-robot systems, the position is usually required to form the desired formation, while its velocity and orientation do not need to keep a strict formation. Then, the output $ y_{i}(t) $ of each robot can be the position only. 
\end{remark}

\vspace*{2pt}
\begin{remark}	
The linear matrix equation (\ref{SystemDynamicRegulatedEquation}) is regarded as the regulated equation that has been widely studied in many existing output regulation literature (e.g., see \cite{Huang12TAC,Li16AT,Cai17AT}). Similarly, a new format of time-varying output formation shape is considered in (\ref{FormationDynamics}) that satisfies the matrix equation (\ref{FormationRegulatorEquation}) to facilitate the subsequently control development and system convergence analysis. It is noted that $(A_{hi}, C_{hi})$ does not require to be detectable, and when $A_{hi}$ is designed, $(X_{hi},U_{hi})$ is the solution of (\ref{FormationRegulatorEquation}). 
\end{remark}

%\vspace*{-3pt} 
\subsection{Unreliable Communication Network}
\vspace*{-2pt}
In a large-scale cyber-physical system, the wireless communication may not be always reliable due to the 
%constraints. For example, the sensor nodes in wireless sensor networks often communicate each other over an unreliable network environment due to 
physical uncertainties such as failures, quantization errors, and packet losses in a digital communication. Hence, we study an unreliable network where all agents' communication links are time-varying and switching.   
Let $\sigma (t):[0,\infty) \rightarrow \Xi =\{1,2,\cdots,\delta\} $ represent a piecewise constant switching signal used to describe the switching among topologies $ \bar{\mathcal{G}}_{\sigma(t)} $ and $ \delta \in \mathbb{N} $ indicates its cardinality. Suppose that there exists a sequence $ t_{0}=0<t_{1}<t_{2}<\cdots $ with $ t_{k+1}-t_{k} \geq \tau>0 $ for a dwell time $ \tau$  and $k \in
\mathcal{\mathbb{N}}$ so that during $ [t_{k},t_{k+1})$, $ \sigma (t)=i $ for $ i \in \Xi $ and this graph $ \bar{\mathcal{G}}_{i} $ is time-invariant \cite{Hu15IJNRC}. 
 
Suppose that $\Xi$ is divided into two subsets $\Xi_{c}$ and $\Xi _{b}$, i.e., $\Xi=\Xi_{c} \cup \Xi_{b}$, where $\Xi _{c}=\{1,2,\cdots,\bar{\delta} \}$ is used to index the set of graphs $ \bar{\mathcal{G}}_{i}$ that contain a directed spanning tree with the leader being the root, while $\Xi _{b}=\{\bar{\delta}+1,\cdots,\delta \}$ indexes the set of graphs $ \bar{\mathcal{G}}_{i}$ that are allowed to be disconnected. Then, we denote $T_{t_{\varepsilon}}^{c}(t)$ and $T_{t_{\varepsilon}}^{b}(t)$ as the total activation time when $ \sigma(\varepsilon) \in \Xi_{c} $ and $ \sigma(\varepsilon) \in \Xi_{b} $, respectively, during $ \varepsilon \in [{t_{\varepsilon}},t) $, $ \forall  t_{\varepsilon} \geq t_{0} $ \cite{Hu15IJNRC}.

\vspace*{1pt}
\begin{condition}\label{SwitchingGraphs} The topologies $ \bar{\mathcal{G}}_{i}, i \in \Xi _{c}$ contain a directed spanning tree with the leader being the root, while $ \bar{\mathcal{G}}_{i}, i \in \Xi _{b}$ are allowed to be disconnected. There exist positive constants $ \pi $ and $ t_{\varepsilon} \geq t_{0} $ such that $T_{t_{\varepsilon}}^{b}(t) \leq \pi T_{t_{\varepsilon}}^{c}(t)$ for $ \forall t \geq  t_{\varepsilon} $.   
\end{condition}
 
\begin{remark}
Assumption \ref{SwitchingGraphs} implies that there are certain topologies that do not contain any directed spanning trees or even can be paralyzed under unreliable communication networks, which is mild in practice. In order to guarantee the information sharing, it supposes that a proportion of communication can work normally for an information exchange among agents. 
\end{remark}

\vspace*{-10pt}  
\subsection{Malicious FDI Attack Model} 
\vspace*{-2pt} 
In this part, we describe the model of unknown and unbounded FDI attacks on the sensors of agents as shown in Fig. \ref{ModellingofAttacksandUnreliable}. 

\vspace*{2pt} 
\begin{definition} (\textbf{\textit{FDI Sensor Attacks}}) This attack refers at time $t^{a}_{si} \geq 0$, an adversary  injects any time-varying signals $y^{a}_{i}(t) \in \mathbb{R}^{p}$ into the measurement channel and thereby, modifying $y_{i}(t) $ into $ y^{c}_{i}(t) \in \mathbb{R}^{p}$ adversely with a mapping $ g_{s} $, i.e.,   
\vspace*{-3pt}
\begin{equation}
g_{s}:  \mathbb{R}^{p} \rightarrow \mathbb{R}^{p}, \  y_{i}(t) \rightarrow   y^{c}_{i}(t)=y_{i}(t)+\phi^{a}_{i}y^{a}_{i}(t), \ i\in \mathcal{V}, \label{SensorFDIModel} 
\end{equation}
where $ y_{i}(t) $ denotes the nominal output measurement in (\ref{FollowerDynamics}), $ y^{a}_{i}(t) $ represents the disrupted signal that is injected into the sensors of agent $ i $ and $ y^{c}_{i}(t) $ denotes the corrupted measurements of agent $ i $. If agent $ i $ is under attacks, $\phi^{a}_{i}=1 $, otherwise, $ \phi^{a}_{i}=0 $. 
\end{definition} 

\vspace*{1pt}
\begin{condition} \label{AttacksCondition}
The FDI attack signals $ y^{a}_{i}(t) $ 
are unknown and unbounded, while their time derivatives are assumed to be upper bounded but with certain unknown constants. 		
\end{condition}

\begin{remark}
The adversary's injections can be unbounded time-varying signals, which aim to manipulate or even destabilize the whole agent team's behaviors. %, which implies that attackers might have enough energy to inject arbitrary attack signals, 
The slow-varying signals may not be easily detected. Hence, this assumption is more practical and reasonable in real-world applications.     
\end{remark}

\vspace*{-8pt} 
\subsection{Main Objective} 
\vspace*{-2pt} 
This work aims to achieve the resilient time-varying output formation tracking of linear multi-agent systems under FDI attacks and unreliable digraphs. We will develop a distributed estimator-based controller with corrupted individual output information and neighbor-based group output information. The control of agent $ i $ is supposed to have the structure as depicted in Fig. \ref{ModellingofAttacksandUnreliable}. In the next section, we will specify the design procedure. 

\vspace*{3pt}
\begin{problem} \label{Problem}
\textit{(Resilient Time-Varying Output Formation Tracking)} Consider a leader-follower agent network consisting of (\ref{FollowerDynamics})-(\ref{LeaderDynamics}). This multi-agent system is subject to malicious FDI attacks in (\ref{SensorFDIModel})  and communicates over unreliable digraphs $ \bar{\mathcal{G}}_{i}, i \in \Xi$. Design a resilient distributed algorithm so that all agents achieve safe and reliable time-varying output formation tracking exponentially, i.e.,     
\vspace*{-9pt} 
\begin{equation}
\lim_{t\rightarrow \infty} (y_{i}(t)-y_{hi}(t)-y_{0}(t))=\textbf{0}, \ i \in \mathcal{V}.
\end{equation}
\end{problem}

\begin{figure}[!t]
	\centering
	%\hspace{-1em}
	\includegraphics[width=6.8cm,height=4.0cm]{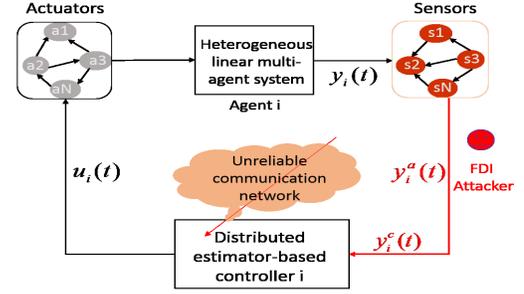}
	\caption{Modeling of malicious FDI attacks and unreliable communication.} 
	\label{ModellingofAttacksandUnreliable}
\end{figure}

\begin{figure}[!t]
	\centering
	%\hspace{-1em}
	\includegraphics[width=7.2cm,height=4.2cm]{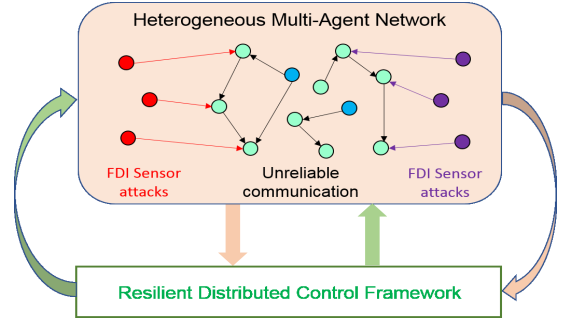}
	\caption{An illustration of a resilient distributed control framework for large-scale multi-agent networks under malicious FDI attacks and unreliable communication.} 
	\label{ResilientFramework}
\end{figure}

\begin{remark}
In contrast to many related existing works, solving Problem 1 is more challenging at least from the following aspects: \textit{1) Adverse effects:} as illustrated in Fig. \ref{ModellingofAttacksandUnreliable}, the unknown malicious FDI attackers can inject and manipulate the measured sensor data to destabilize the whole multi-agent network. Since the output of each agent is under FDI sensor attacks, only the uncompromised measurements will be available for a resilient distributed design; \textit{2) Communication network:} the unreliable communication makes topologies frequently switching rather than the fixed undirected or directed graphs in \cite{Dong17TAC,Dong16AT,Guan17TIE,Wang18AT,Dong19TAC}. The underlying Laplacian matrices of each digraph are not necessarily positive definite;
% which brings nontrivial system analysis; 
\textit{(3) Output time-varying formation tracking:} 
%since the agent has the heterogeneous dynamics with different dimensions, 
many formation tracking algorithms in the existing works (e.g.,   \cite{Dong17TAC,Dong16AT,Guan17TIE,Wang18AT,Dong19TAC}) require  the full states of either the  homogeneous or heterogeneous leader-follower system, which may not be available in practice. On the contrary, we adopt each agent's corrupted outputs to develop a distributed observer-based controller; and \textit{(4) Design requirement:} propose a novel resilient algorithm to deal with the adverse impacts of FDI attacks and unreliable communication. Due to the aforementioned aspects, the existing designs in \cite{Dong17TAC,Dong16AT,Guan17TIE,Wang18AT,Dong19TAC} cannot be directly applied. A resilient distributed control framework is shown in Fig. \ref{ResilientFramework}.
\end{remark}

\section{Exponential Distributed Stabilization for Resilient Time-Varying Output Formation Tracking} 
In this section, we will design a resilient distributed mechanism with an estimator-based control framework.  
%In this section, we present a novel distributed approach to solve this resilient time-varying output formation tracking problem of heterogeneous linear multi-agent systems under unbounded FDI attacks and unreliable communication. 
To begin with, a distributed leader estimator is given to estimate and reconstruct the leader's state for each follower over an unreliable communication network. Then, a resilient distributed output feedback controller is proposed to achieve time-varying output formation tracking. The procedure is given to design the estimator and controller gains. 
 
\vspace{-10pt} 
\subsection{Reliable Distributed Leader Estimator Design}  
\vspace{-1pt} 
%Since the output $ y_{0}(t)  $ is only available to a subset of followers for feedback control, we develop a distributed estimator to reconstruct $ y_{0}(t)  $ for each follower using the neighboring information interaction over an unreliable communication network. 
Denote an estimated state named as $ \zeta_{i} $ to estimate the leader's state $x_{0}(t) $. Then, a consensus tracking error is defined as  
\vspace{-3pt}
\begin{equation}
\xi_{i}(t)=\sum\nolimits_{i=1}^{N} a^{\sigma(t)}_{ij}(\zeta_{i}(t) -\zeta_{j}(t))+ a^{\sigma(t)}_{i0}(\zeta_{i}(t) - x_{0}(t)).
%i \in \mathcal{V}, 
\label{ConsensusTrackingError}
\end{equation} 
%where $ a^{\sigma(t)}_{ij}, j=0,1,\cdots,N $ are the adjacency elements of $ \bar{\mathcal{G}}^{\sigma(t)} $ and $\sigma (t):[0,\infty) \rightarrow \Xi$ is to index the unreliable topologies.
%$ \mathcal{A}^{\sigma(t)}=[a^{\sigma(t)}_{ij}]$,  $j=0,1,\cdots,N $ represents the adjacency matrix of $ \bar{\mathcal{G}}^{\sigma(t)} $ and $\sigma (t):[0,\infty) \rightarrow \Xi=\Xi_{c} \cup \Xi_{b}$ is the switching signal to index the various connected and disconnected topologies caused by unreliable communication.
 
In light of (\ref{ConsensusTrackingError}), a reliable distributed leader estimator using only output information is designed with a constant gain matrix $ K_{0} $,
\vspace{-3pt}
\begin{equation}
\dot{\zeta}_{i}(t)= A_{0} \zeta_{i}(t) - K_{0}C_{0} \xi_{i}(t), \ i \in \mathcal{V}. \label{DistributedLeaderEstimator}
\end{equation}   
%where $ C_{0} $ is the output matrix of the leader in (\ref{LeaderDynamics}) and $ K_{0} $ is the estimator gain matrix to be designed later.  
 
\vspace{-2pt} 
Denote the tracking error  $ \tilde{\xi}_{i}(t)= \zeta_{i}(t)-x_{0}(t)$ and its collective form is given by $\tilde{\xi}(t)= \text{col}(\tilde{\xi}_{1}(t),\cdots,\tilde{\xi}_{N}(t))$. Then, combing (\ref{ConsensusTrackingError}) and (\ref{DistributedLeaderEstimator}) gives rise to the following closed-loop error system 
\vspace{-4pt}
\begin{equation}
\dot{\tilde{\xi}}(t)= [I_{N} \otimes A_{0} - (H_{\sigma(t)}\otimes K_{0}C_{0} )] \tilde{\xi} (t). \label{DistributedLeaderErrorSystem}
\end{equation}    
 
\vspace{-2pt}  
%Based on (\ref{DistributedLeaderErrorSystem}), the rest is to demonstrate the global exponential convergence of this closed-loop error system. 
Next, the following lemma is provided to establish a symmetric and positive definite matrix for directed graphs.
 
\vspace{1pt} 
\begin{lemma}\label{QswitchingLemma} Under Assumption \ref{SwitchingGraphs}, there exist positive definite diagonal matrices $\Theta_{\sigma(t)} = \text{diag} \{\theta^{1}_{\sigma(t)}, \cdots, \theta^{N}_{\sigma(t)}\}$ for each $ \sigma(t) \in \Xi_{c} $, so that $Q_{\sigma (t)}=H_{\sigma(t)}^{T}\Theta_{\sigma (t)}+\Theta _{\sigma (t)}H_{\sigma (t)}>0$. 
%	is a symmetric positive definite matrix. 
%	where $\theta _{\sigma (t)}=\text{col}(\theta^{1}_{\sigma(t)},\cdots \theta^{N}_{\sigma (t)})= H_{\sigma(t)}^{-T}\mathbf{1}.$
\end{lemma}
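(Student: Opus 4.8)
The plan is to reduce the claim to a graph-by-graph application of Lemma \ref{definedlemma}, exploiting that $\Xi_{c}$ is a finite index set. First I would fix an arbitrary index $i \in \Xi_{c}$ and observe that, by Assumption \ref{SwitchingGraphs}, the associated digraph $\bar{\mathcal{G}}_{i}$ contains a directed spanning tree rooted at the leader. Hence the information exchange matrix $H_{i}=\mathcal{L}_{i}+\mathcal{B}_{i}$ satisfies the hypothesis of Lemma \ref{definedlemma}, which immediately guarantees that $H_{i}$ is nonsingular and positive definite and that the diagonal matrix $\Theta_{i}=\text{diag}(\theta^{1}_{i},\cdots,\theta^{N}_{i})$ defined through $\theta_{i}=\text{col}(\theta^{1}_{i},\cdots,\theta^{N}_{i})=H_{i}^{-T}\mathbf{1}$ renders $Q_{i}=H_{i}^{T}\Theta_{i}+\Theta_{i}H_{i}$ symmetric and positive definite. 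Identifying $\sigma(t)=i$ on each subinterval where the $i$-th graph is active then yields exactly the asserted matrices $\Theta_{\sigma(t)}$ and $Q_{\sigma(t)}$.

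The one point that requires verification beyond a literal citation is that $\Theta_{i}$ is genuinely positive definite, i.e. that every diagonal entry $\theta^{j}_{i}$ is strictly positive, since otherwise $\Theta_{i}$ would fail to be an admissible weighting matrix. To this end I would invoke the M-matrix structure of $H_{i}$: under the rooted spanning-tree condition $H_{i}$ is a nonsingular M-matrix, so its inverse $H_{i}^{-1}$ is entrywise nonnegative, and in fact strictly positive because the directed spanning tree provides a path from the leader to every node. Consequently each component $\theta^{j}_{i}=\sum_{k=1}^{N}(H_{i}^{-1})_{kj}$ is a strictly positive column sum of $H_{i}^{-1}$, establishing $\Theta_{i}>0$. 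I expect this positivity argument to be the main (and essentially only) obstacle; the rest is a direct specialization of Lemma \ref{definedlemma} to each connected mode.

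Finally, because $\Xi_{c}=\{1,\cdots,\bar{\delta}\}$ contains only finitely many indices, the construction above can be carried out for all $i \in \Xi_{c}$ at once, producing a finite family $\{\Theta_{i}\}_{i\in\Xi_{c}}$ of positive definite diagonal matrices with the required property for every admissible switching value $\sigma(t)\in\Xi_{c}$. A useful by-product worth recording here is a uniform bound: since $\Xi_{c}$ is finite, one has $\min_{i\in\Xi_{c}}\lambda_{\min}(Q_{i})>0$ together with finite extremal eigenvalues of $\Theta_{i}$, which will later streamline the switched-system Lyapunov analysis of the error dynamics (\ref{DistributedLeaderErrorSystem}).
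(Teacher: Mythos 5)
Your overall strategy coincides with the paper's: the paper's proof of this lemma is a one-line remark that the result extends Lemma \ref{definedlemma} from a fixed digraph to time-varying ones (citing existing works), with details omitted, and your graph-by-graph application of Lemma \ref{definedlemma} over the finite index set $\Xi_{c}$ is precisely the omitted argument. Your closing observation that finiteness of $\Xi_{c}$ yields uniform eigenvalue bounds is also what the paper implicitly relies on when it forms $\mu$, $\lambda_{m}$ and $\sigma_{m}$ in (\ref{mu})--(\ref{Eigenvalues}).

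There is, however, one incorrect intermediate claim in your positivity argument. You assert that $H_{i}^{-1}$ is entrywise strictly positive ``because the directed spanning tree provides a path from the leader to every node.'' Entrywise strict positivity of the inverse of a nonsingular M-matrix requires irreducibility of $H_{i}$, i.e., strong connectivity among the followers, which the rooted-spanning-tree condition in Assumption \ref{SwitchingGraphs} does not imply. Concretely, take $N=2$ with the leader linked to both followers by unit weights and no follower-follower edges: then $\mathcal{L}_{i}=0$, $\mathcal{B}_{i}=I_{2}$, so $H_{i}=I_{2}$ and $H_{i}^{-1}=I_{2}$, whose off-diagonal entries vanish. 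Fortunately, the conclusion you need survives under a weaker, correct fact: writing the nonsingular M-matrix as $H_{i}^{T}=sI-B$ with $B\geq 0$ and $s>\rho(B)$, one has $H_{i}^{-T}=\sum_{k\geq 0}B^{k}/s^{k+1}\geq 0$, and the $k=0$ term already gives $(H_{i}^{-T})_{jj}\geq 1/s>0$. Hence $\theta_{i}^{j}=\sum_{k}(H_{i}^{-1})_{kj}\geq (H_{i}^{-1})_{jj}>0$, which is all that is required for $\Theta_{i}>0$. With that single repair, your proof is complete and matches the paper's intended argument.
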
   

\vspace{1pt}
\begin{proof}
this result extends Lemma \ref{definedlemma} from a fixed digraph to time-varying ones as shown in the existing works (e.g., \cite{BookLi14}). The detailed proof is omitted due to space limitation. 
\end{proof}	
 
\vspace{1pt} 
For notational convenience, denote 
\vspace{-3pt}
\begin{equation} 
\mu= \left\{ 
\begin{array}{c}
%\hspace{-0.1em}
\max_{i\in \Xi_{c}} \{ \lambda_{max}(\Theta_{i})/  \lambda_{min}(\Theta_{i}) \}, \ \text{if} \ \bar{\delta} >1,      \\ 
%\hspace{-0.6em}
1, \ \ \ \ \ \  \ \ \  \ \ \  \ \ \ \ \ \  \ \ \  \ \ \  \ \ \  \ \ \ \ \ \ \ \ \ \ \   \text{if} \ \bar{\delta} = 1, 
\end{array}
\right. 
\label{mu}
\end{equation}
where $ \Theta_{i}>0 $, $ i \in \Xi_{c} $ are defined in Lemma \ref{QswitchingLemma}, and further, let
\vspace{-3pt}
%\begin{subequations}\label{Eigenvalues}
\begin{align}
\lambda_{m}&=\min_{i \in \Xi_{c}}\{\lambda_{min}(\Theta ^{-1}_{i}H_{i}^{T}\Theta_{i}+H_{i})\}= \min_{i \in \Xi_{c}}\{\lambda_{min}(\Theta ^{-1}_{i}Q_{i})\}, \notag \\
%\ i \in \Xi_{c}, 
%\label{E1} \\
\sigma_{m}&=\max_{i \in \Xi_{d}}\{\sigma_{max}(\Phi^{-1} H_{i}^{T}\Phi+ H_{i})\}, \ \Phi=(\sum_{i=1}^{\bar{\delta}}\Theta_{i}) /\bar{\delta},
%\ i \in \Xi_{d},  
\label{Eigenvalues}
\end{align}
%\end{subequations} 
where $ \Phi>0 $ denotes the average of $\Theta_{i}$, $\forall i \in \Xi_{c} $, and moreover, it is not difficult to derive $ \Phi < \mu \Theta_{i}$, $\forall  i \in \Xi_{c} $.

\begin{theorem} \label{theorem1} 
Suppose that Assumption \ref{SwitchingGraphs} holds. If the distributed leader estimator is designed as (\ref{DistributedLeaderEstimator}) with $ K_{0}=\kappa_{0} P^{-1}_{0}C^{T}_{0}R^{-1}_{0} $, $\kappa_{0} \in (1/\lambda_{m}, \epsilon/\sigma_{m}) $, then all the estimated states of the distributed  estimator can globally exponentially converge to the leader's state, provided that the scalars $ \tau_{a}$ and $ \pi $ satisfy  
\vspace{-3pt} 
\begin{align}
%\frac{1} {\lambda_{m}}< \kappa_{0}< \frac{\epsilon}{\sigma_{m}}, \ 
%\tau_{a}  > \frac{\ln \mu } { \eta^{*}-\eta},  \ \pi <  \frac{\alpha- \eta^{*}}{\beta+\eta^{*}},  
\tau_{a}  > (\ln \mu) /(\eta^{*}-\eta) \ \text{and} \  \pi <  (\alpha- \eta^{*}) / (\beta+\eta^{*}), 
\label{Condition}
\end{align}
where $ \mu \geq 1 $, $\alpha =\lambda _{\min}(Q_{0})/\lambda _{\max}(P_{0})$, $ \eta^{*} \in (0,\alpha) $, $ \eta \in (0,\eta^{*} ) $, and  $ \beta, P_{0}>0 $ are the solutions of an optimization problem: 
%algebraic Riccati equation (ARE) 
\vspace{-3pt}
\begin{equation}
%\hspace{-15em} 
\text{minimize} \ \beta >0, \label{optimization}
\end{equation}
%\vspace{-2pt}
\begin{equation}
s.t. \left\{ 
\begin{array}{c}
P_{0}A_{0}+A^{T}_{0}P_{0}- C^{T}_{0}R^{-1}_{0}C_{0}+Q_{0}<0,  \\
P_{0}A_{0}+ A^{T}_{0}P_{0} +\epsilon C^{T}_{0}R^{-1}_{0} C_{0} -\beta P_{0} <0,  \label{ARE} 
\end{array}
\right.
\end{equation}
where $R_{0} $ and $Q_{0}>I_{r}$ are symmetric positive definite matrices.   
\end{theorem}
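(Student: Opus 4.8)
The plan is to prove global exponential convergence of the error system \eqref{DistributedLeaderErrorSystem} by constructing a family of quadratic Lyapunov functions, one for each topology mode, and then using the dwell-time and average-activation-time conditions in \eqref{Condition} to glue the modes together. First I would define, for each $\sigma(t)=i \in \Xi_c$, the mode-dependent Lyapunov function $V_i(t)=\tilde{\xi}^T(t)\,(\Theta_i \otimes P_0)\,\tilde{\xi}(t)$, where $\Theta_i>0$ is the diagonal matrix from Lemma \ref{QswitchingLemma} and $P_0>0$ solves the first LMI in \eqref{ARE}. Differentiating along \eqref{DistributedLeaderErrorSystem} and substituting $K_0=\kappa_0 P_0^{-1}C_0^T R_0^{-1}$ produces a quadratic form whose kernel, after using the Kronecker-product identity $(\Theta_i\otimes P_0)(I_N\otimes A_0)+(\cdots)^T$ and $(\Theta_i H_i)\otimes(P_0 K_0 C_0)$, is
\begin{equation}
\Theta_i\otimes(P_0A_0+A_0^TP_0)-\kappa_0\,(\Theta_i H_i+H_i^T\Theta_i)\otimes(C_0^TR_0^{-1}C_0).\notag
\end{equation}
Here $\Theta_i H_i+H_i^T\Theta_i=Q_i>0$, so I can lower-bound $Q_i \geq \lambda_m \Theta_i$ (using the definition of $\lambda_m$ in \eqref{Eigenvalues}), and since $\kappa_0>1/\lambda_m$ gives $\kappa_0\lambda_m>1$, the cross term dominates enough to invoke the first LMI constraint $P_0A_0+A_0^TP_0-C_0^TR_0^{-1}C_0+Q_0<0$. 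This yields $\dot V_i \leq -\alpha V_i$ on each stable mode with $\alpha=\lambda_{\min}(Q_0)/\lambda_{\max}(P_0)$.

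Next I would handle the paralyzed modes $i\in\Xi_b$, where $H_i$ need not be positive definite and convergence cannot be guaranteed. For these I use a common Lyapunov function built from the averaged matrix $\Phi=(\sum_{i=1}^{\bar\delta}\Theta_i)/\bar\delta$, namely $\bar V(t)=\tilde\xi^T(t)(\Phi\otimes P_0)\tilde\xi(t)$, and show via the second LMI in \eqref{ARE} together with the singular-value bound $\sigma_m$ and the condition $\kappa_0<\epsilon/\sigma_m$ that the growth is bounded: $\dot{\bar V}\leq \beta\,\bar V$ on the bad modes. The second LMI $P_0A_0+A_0^TP_0+\epsilon C_0^TR_0^{-1}C_0-\beta P_0<0$ is precisely what caps this growth rate at $\beta$, with $\beta$ minimized by \eqref{optimization}.

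The gluing step is where the switching analysis enters and is the main obstacle. Because the Lyapunov functions are mode-dependent, at each switching instant $t_k$ there is a jump $V_{\sigma(t_k)}\leq \mu\,V_{\sigma(t_k^-)}$, where $\mu\geq 1$ bounds the ratio of any two Lyapunov functions through the eigenvalue ratio $\lambda_{\max}(\Theta_i)/\lambda_{\min}(\Theta_i)$ in \eqref{mu} and the relation $\Phi<\mu\Theta_i$. The plan is to chain the mode-wise estimates over $[t_\varepsilon,t)$: on each stable interval $V$ decays at rate $\alpha$ (slightly relaxed to $\eta^*$ to absorb cross-mode mismatch), on each bad interval it grows at rate $\beta$, and each of the switches (at most $(t-t_\varepsilon)/\tau_a$ of them) contributes a factor $\mu$. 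Collecting exponents gives a bound of the form
\begin{equation}
V(t)\leq \mu^{(t-t_\varepsilon)/\tau_a}\,e^{-\eta^* T^c_{t_\varepsilon}(t)+\beta T^b_{t_\varepsilon}(t)}\,V(t_\varepsilon).\notag
\end{equation}
The two inequalities in \eqref{Condition} are exactly what force the total exponent negative: $\tau_a>(\ln\mu)/(\eta^*-\eta)$ ensures the per-switch jumps are outpaced by decay so that $\mu^{(t-t_\varepsilon)/\tau_a}\leq e^{(\eta^*-\eta)T^c}$ roughly, and $\pi<(\alpha-\eta^*)/(\beta+\eta^*)$ together with $T^b\leq\pi T^c$ from Assumption \ref{SwitchingGraphs} ensures the bad-mode growth $\beta T^b$ stays below the good-mode decay. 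I would finish by combining these to obtain $V(t)\leq c\,e^{-\eta(t-t_\varepsilon)}V(t_\varepsilon)$ for some $\eta>0$, which gives global exponential convergence of $\tilde\xi(t)$ to zero, i.e.\ $\zeta_i(t)\to x_0(t)$ exponentially. The delicate bookkeeping of the switching exponents, and verifying that the relaxation from $\alpha$ to $\eta^*$ correctly accommodates the mode-dependent Lyapunov mismatch, is the step I expect to demand the most care.
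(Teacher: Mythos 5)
Your proposal is correct and follows essentially the same route as the paper's proof: the same piecewise (mode-dependent) Lyapunov functions $\tilde{\xi}^{T}(\Theta_{i}\otimes P_{0})\tilde{\xi}$ on connected modes and $\tilde{\xi}^{T}(\Phi\otimes P_{0})\tilde{\xi}$ on paralyzed modes, the same use of $\kappa_{0}\lambda_{m}>1$ with the first LMI to get decay rate $\alpha$, of $\kappa_{0}\sigma_{m}<\epsilon$ with the second LMI to cap growth at $\beta$, and the same gluing via the jump factor $\mu$, the activation-time ratio $\pi$, and the average dwell time $\tau_{a}$. The only imprecision is in your chained bound, where the stable-mode exponent should be $-\alpha T^{c}$ (the relaxation to $\eta^{*}$ enters only after applying $T^{b}\leq\pi T^{c}$ with $\pi<(\alpha-\eta^{*})/(\beta+\eta^{*})$, and the remaining slack $\eta^{*}-\eta$ absorbs the $\mu^{N}$ factor, with an extra constant $e^{N_{0}\ln\mu}$ from the chatter bound); this is exactly the bookkeeping you flagged, and it goes through as in the paper.
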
 
 
\vspace{3pt} 
\begin{proof}
we construct the following piecewise Lyapunov functional candidate for the closed-loop error system in (\ref{DistributedLeaderErrorSystem}) as   
\vspace*{-2pt} 
\begin{equation}
V(t)=\left\{ 
\begin{array}{c}
\tilde{\xi}^{T}(t) (\Theta_{\sigma(t)} \otimes P_{0}) \tilde{\xi}(t),  \ \text{if} \ \sigma(t) \in \Xi_{c}, \\ 
\tilde{\xi}^{T}(t) (\Phi \otimes P_{0}) \tilde{\xi}(t),  \ \ \ \ \ \text{if} \ \sigma(t) \in \Xi_{d}.
\end{array}
\right.  
\label{PiecewiseLyapunovFunction}
\end{equation}
where $ \Theta_{\sigma(t)} $ and $ \Phi $ are defined in Lemma \ref{QswitchingLemma} and (\ref{Eigenvalues}), respectively.

Next, the proof includes the following steps: 

Step (i). we consider the case with $ \sigma(t) \in \Xi_{c}  $. Taking the time derivative of (\ref{PiecewiseLyapunovFunction}) along (\ref{DistributedLeaderErrorSystem}) with $ K_{0}=\kappa_{0} P^{-1}_{0}C^{T}_{0}R^{-1}_{0} $ yields 
\vspace*{-3pt} 
\begin{align}
\dot{V}(t) & = \tilde{\xi}^{T}(t) [\Theta_{\sigma(t)} \otimes (P_{0}A_{0}+ A^{T}_{0}P_{0})]\tilde{\xi}(t) 
\label{L1} \\
& \ \ - \kappa_{0}\tilde{\xi}^{T}(t) [(H_{\sigma(t)}^{T}\Theta_{\sigma(t)}+\Theta_{\sigma(t)}H_{\sigma(t)}) \otimes C^{T}_{0}R^{-1}_{0} C_{0})]\tilde{\xi}(t). \notag  
%&=\tilde{\xi}^{T}(t) [\Theta_{\sigma(t)} \otimes (P_{0}A_{0}+ A^{T}_{0}P_{0} -\kappa_{0}  (Q_{\sigma(t)} \otimes C^{T}_{0}R^{-1}_{0} C_{0})  ] \tilde{\xi}(t)  \notag
%& \leq \tilde{\xi}^{T}(t) [\Theta_{\sigma(t)} \otimes (P_{0}A_{0}+ A^{T}_{0}P_{0} -\kappa \lambda_{m} C^{T}_{0} C_{0})  ] \tilde{\xi}(t). \notag 
\end{align}

According to (\ref{ARE}) and $ \kappa_{0}> 1 / \lambda_{m} $, we have 
\vspace*{-2pt} 
\begin{align}
\dot{V}(t)& \leq \tilde{\xi}^{T}(t) [\Theta_{\sigma(t)} \otimes (P_{0}A_{0}+ A^{T}_{0}P_{0} -   C^{T}_{0}R^{-1}_{0} C_{0})  ] \tilde{\xi}(t) \notag \\
&  \leq  - \tilde{\xi}^{T}(t)  (\Theta_{\sigma(t)} \otimes Q_{0}) \tilde{\xi}(t) \leq  - \alpha V(t),   \sigma(t) \in \Xi_{c},
\label{L2}
\end{align} 
where $\alpha =\lambda _{\min}(Q_{0})/\lambda _{\max}(P_{0})$ based on the fact that 
%$-Q_{0} \leq -\alpha P_{0}$.
$-Q_{0} \leq -\lambda _{\min }(Q_{0})I_{r}=-\alpha \lambda _{\max }(P_{0})I_{r}\leq -\alpha P_{0}$. 

%Without loss of generality, there exists a positive integer $ k $
%such that $ t \in [t_{k}, t_{k+1}) $, for any given $ t \geq t_{0} $. Then, it follows from (\ref{L2}) that integrating both sides over $ [t_{k}, t_{k+1}) $ gives rise to 
%\vspace*{-2pt} 
%\begin{equation}
%V(t) \leq e^{-\alpha (t-t_{k})} V(t_{k}), \  t \in [t_{k}, t_{k+1}). 
%\label{L3}
%\end{equation}

Step (ii). consider the case with $ \sigma(t) \in \Xi_{b}  $. Similarly, the time derivative of (\ref{PiecewiseLyapunovFunction}) along   (\ref{DistributedLeaderErrorSystem}) with the fact that $ \kappa_{0}\sigma_{m} < \epsilon $ yields
\vspace*{-2pt} 
\begin{align}
\dot{V}(t) & = \tilde{\xi}^{T}(t) [\Phi \otimes (P_{0}A_{0}+ A^{T}_{0}P_{0})]\tilde{\xi}(t) - \kappa_{0} \tilde{\xi}^{T}(t)
\notag  \\
& \ \ \ \times [ (H_{\sigma(t)}^{T} \Phi + \Phi H_{\sigma(t)}) \otimes C^{T}_{0}R^{-1}_{0} C_{0})]\tilde{\xi}(t) \notag \\
& \leq  \tilde{\xi}^{T}(t) [\Phi \otimes (P_{0}A_{0}+ A^{T}_{0}P_{0}) + \kappa_{0} \sigma_{m}  C^{T}_{0}R^{-1}_{0} C_{0}) ]\tilde{\xi}(t) \notag \\
& \leq  \tilde{\xi}^{T}(t) ( \Phi \otimes \beta P_{0}) \tilde{\xi}(t) = \beta V(t),  \sigma(t) \in \Xi_{b}.
\label{L4}
\end{align}
%where (\ref{ARE}) is used to obtain the last inequality. 

%Then, integrating both sides of (\ref{L4}) over $ [t_{k}, t_{k+1}) $ yields 
%\vspace*{-2pt} 
%\begin{equation}
%V(t) \leq e^{\beta (t-t_{k})} V(t_{k}), \  t \in [t_{k}, t_{k+1}). 
%\label{L5}
%\end{equation}

Step (iii). synthesizing Steps (i)-(ii) into one, it is obtained from (\ref{L2}) and (\ref{L4}) that for any $t\in \left[t_{k},t_{k+1}\right) $, we can have 
\vspace*{-3pt}
\begin{equation}
V(t) \leq \left\{
\begin{array}{c}
e^{-\alpha (t-t_{k})}V(t_{k}), \ \sigma(t) \in \Xi_{c}, \\
e^{\beta (t-t_{k})} V(t_{k}), \ \sigma(t) \in \Xi_{b}.
\end{array}%
\right.  \label{L6}
\end{equation}

Since $ \sigma(t) \in \Xi=\Xi_{c}\cup \Xi_{b} $, it further has for any $t\in \left[t_{k},t_{k+1}\right) $,
\vspace*{-5pt}
\begin{equation}
V(t) \leq e^{-\alpha T^{c}_{t_{k}}(t)+\beta T^{b}_{t_{k}}(t)} V(t_{k}).  \label{L7}
\end{equation}

Suppose that 
%$\sigma(t_{k})$ is activated over $t\in \left[t_{k},t_{k+1}\right) $ and $\sigma(t_{k-1})$ is identified at $t_{k-1}$. 
there is no jump in the state $\zeta_{i}(t)$
at the switching instant, i.e., $\zeta_{i}(t_{k})=\zeta_{i}(t_{k}^{-})$. Further, using  (\ref{PiecewiseLyapunovFunction}) gives rise to
\vspace*{-3pt}
\begin{equation}
V(t) \leq \mu V(t_{k}^{-}), \ \forall \mu \geq 1.  \label{L8}
\end{equation}

\vspace*{-4pt}
Let $ N_{\sigma(t)}(t_{0},t) $ denote the times of switching during $ [t_{0},t) $. Then, it follows from (\ref{L7}) and (\ref{L8}) that 
\vspace*{-2pt}
\begin{align}
V(t) &\leq \mu e^{\beta
	T^{b}_{t_{k}}(t)  -\alpha T^{c}_{t_{k}}(t)} V(t_{k}^{-})  \leq \mu e^{ \beta 	T^{b}_{t_{k-1}}(t) -\alpha T^{c}_{t_{k-1}}(t)} V(t_{k-1})  \notag \\
&\leq  \mu^{2} e^{ -\alpha T^{c}_{t_{k-1}}(t)+\beta
	T^{b}_{t_{k-1}}(t)} V(t_{k-1}^{-}) \ \leq  \   \cdots  \notag \\
&\leq  \mu^{N_{\sigma(t)}(t_{0},t) } e^{-\alpha T^{c}_{t_{0}} (t)+\beta T^{b}_{t_{0}}(t)} V(t_{0})
\notag \\
&= e^{ N_{\sigma(t)}(t_{0},t) \ln (\mu) - \alpha T^{c}_{t_{0}} (t)+\beta T^{b}_{t_{0}}(t)} V(t_{0}).
\label{L9}
\end{align}  

\vspace*{-4pt}
On one hand, based on Assumption \ref{SwitchingGraphs}, we get $ T^{b}_{t_{0}}(t) \leq \pi T^{c}_{t_{0}}(t) $. On the other hand, $ \pi < (\alpha-\eta^{*})/(\beta+\eta^{*}) $ in (\ref{Condition}). Thus, we get
\vspace*{-2pt}
\begin{equation}
\beta T^{b}_{t_{0}}(t)-\alpha T^{c}_{t_{0}}(t) \leq -\eta^{\ast} (T^{c}_{t_{0}}(t)+T^{b}_{t_{0}}(t) )= -\eta^{\ast}(t-t_{0}).
\label{L10}
\end{equation}%

In addition, according to the average dwell time definition in the existing works, e.g., \cite{Hu15IJNRC},  $ N_{\sigma(t)}(t_{0},t) \leq N_{0}+(t-t_{0})/\tau_{a} $ for all $ t \geq t_{0} $. Since $ \tau_{a}  >  (\ln(\mu)) /  (\eta^{*}-\eta) $ in (\ref{Condition}), we have  
\vspace*{-1pt}
\begin{equation}
\hspace{-0.55em}
e^{ N_{\sigma(t)}(t_{0},t) \ln (\mu) } \leq   e^{ (N_{0}+ \frac{t-t_{0}}{\tau_{a}}) \ln (\mu) }  \leq  e^{N_{0} \ln (\mu)} e^{(\eta^{*}-\eta) (t-t_{0})}.  \label{L11}
\end{equation}

\vspace*{-3pt}
Combining (\ref{L9})-(\ref{L11}) yields: $ V(t) \leq  e^{N_{0} \ln (\mu)} e^{- \eta (t-t_{0})} V(t_{0}) $.
%\begin{equation}
%V(t) \leq  e^{N_{0} \ln (\mu)} e^{- \eta (t-t_{0})} V(t_{0}).  \label{L12}
%\end{equation}
Hence, it follows from (\ref{PiecewiseLyapunovFunction}) that we have 
\vspace*{-2pt}
\begin{equation}
\hspace{1em}
\| \tilde{\xi}_{i}(t) \| \leq \varphi e^{-\frac{\eta}{2}
	(t-t_{0})} \| \tilde{\xi}_{i}(t_{0}) \|,  \label{L13}
\end{equation}%
where $\varphi = (a e^{N_{0} \ln (\mu)} /b)^{\frac{1}{2}}$, $ a=  \max_{i \in \Xi_{c}} \{\lambda_{\max}(\theta^{j}_{i} P_{0}),  \lambda_{\max} (\Phi    \\ P_{0})\}$ and $b=\min_{i \in \Xi_{c}} \{\lambda_{\min}(\theta^{j}_{i} P_{0}), \lambda _{\min}(\Phi P_{0})\}$, $ j\in \mathcal{V} $.  
%From (\ref{L13}), it is concluded that the proposed reliable distributed leader estimator guarantees that the leader's state can be globally exponentially estimated and reconstructed for each follower over an unreliable communication. This completes the proof.  
\end{proof}

%\vspace{-6pt}  
\subsection{Resilient Distributed Control against FDI Sensor Attacks} 
\vspace{-2pt}
In this subsection, we propose a novel resilient mechanism to achieve exponential output formation tracking in the presence of FDI sensor attacks in (\ref{SensorFDIModel}). Under sensor attacks, the output $ y_{i}(t) $ is corrupted and only the corrupted $  y^{c}_{i}(t)=y_{i}(t)+\phi^{a}_{i} y^{a}_{i}(t)$ can be measured. To deal with these FDI sensor attacks, a novel resilient distributed output feedback controller is developed as 
\vspace{-2pt} 
\begin{align}
u_{i}(t)&=K_{1i}\hat{x}_{i}(t)+K_{2i}\zeta_{i}(t)+K_{3i}h_{i}(t),  \label{ResilientDistributedSensorAlgorithm}  \\
\dot{\hat{x}}_{i}(t)&=A_{i}\hat{x}_{i}(t)+B_{i}u_{i}(t)+L_{i} \tilde{y}_{i}(t), \label{ResilientObserver} 
\end{align} 
where $ K_{1i}, K_{2i}, K_{3i}, L_{i} $  are controller and observer gain matrices to be determined later, $ \hat{x}_{i}(t) $ represents the observed state of the observer, $ \zeta_{i}(t) $ is the estimate of the leader's state designed in (\ref{ConsensusTrackingError}) and (\ref{DistributedLeaderEstimator}), $ h_{i}(t) $ is the time-varying formation vector defined in (\ref{FormationDynamics}), and $ \tilde{y}_{i}(t) $ denotes the measurable error term described by
\vspace{-3pt}
\begin{equation}
\tilde{y}_{i}(t)= y^{c}_{i}(t)-\hat{y}_{i}(t)-\hat{y}^{a}_{i}(t), \ i\in \mathcal{V}, \label{ResilientIndex}
\end{equation}
where $ \hat{y}_{i}(t)=C_{i}\hat{x}_{i}(t) $ is an estimation of the uncorrupted output measurement $ y_{i}(t) $, and $ \hat{y}^{a}_{i}(t) $ is an estimation of unknown sensor attacks $ \phi^{a}_{i} y^{a}_{i}(t) $, which is updated through the following design:
\vspace{-2pt}
\begin{equation}
\dot{\hat{y}}^{a}_{i}(t)=M_{i} (y^{c}_{i}(t)-\hat{y}_{i}(t)-\hat{y}^{a}_{i}(t))+ f^{a}_{i}(t), \label{SensorAdaptive}
\end{equation} 
where $ M_{i} $ is a gain matrix with appropriate dimensions and $ f^{a}_{i}(t) $ denotes a compensation signal to be determined later.  
 
Next, we denote two estimated errors:
\vspace{-3pt} 
\begin{equation}
\tilde{x}_{i}(t)=x_{i}(t)-\hat{x}_{i}(t), \ \tilde{y}^{a}_{i}(t)=  \phi^{a}_{i} y^{a}_{i}(t)-\hat{y}^{a}_{i}(t). \label{ErrorTerms}
\end{equation}

Then, based on (\ref{ResilientObserver})-(\ref{SensorAdaptive}), we can obtain that 
\vspace{-3pt} 
\begin{align}
\dot{\tilde{x}}_{i}(t)&= (A_{i}-L_{i}C_{i})\tilde{x}_{i}(t)-L_{i}\tilde{y}^{a}_{i}(t),   \label{ClosedLoop1}  \\
\dot{\tilde{y}}^{a}_{i}(t)&=-M_{i}\tilde{y}^{a}_{i}(t)-M_{i}C_{i}\tilde{x}_{i}(t)+\phi^{a}_{i} \dot{y}^{a}_{i}(t)-f^{a}_{i}(t). \label{ClosedLoop2} 
\end{align}  
 
To achieve the main objective of this paper, we need to analyze the global exponential convergence of $ \tilde{x}_{i}(t) $ and $ \tilde{y}^{a}_{i}(t) $. Denote $ \varrho_{i}=\text{col}(\tilde{x}_{i}(t),\tilde{y}^{a}_{i}(t)) \in \mathbb{R}^{n_{i}+p}$ as an augmented error variable. Further, it follows from (\ref{ClosedLoop1})-(\ref{ClosedLoop2}) that we can derive the following closed-loop error system described by  
\vspace{-2pt} 
\begin{equation}
\dot{\varrho}_{i}(t)=A_{\varrho i} \varrho_{i}(t)+B_{\varrho i} \left[ \phi^{a}_{i} \dot{y}^{a}_{i}(t)- f^{a}_{i}(t) \right] , \label{ClosedLoopErrorSystem}
\end{equation} 
%\vspace{-2pt} 
\begin{equation}
A_{\varrho i} = \left[
\begin{array}{cc}
A_{i}-L_{i}C_{i} & -L_{i} \\
-M_{i}C_{i} & -M_{i} 
\end{array}
\right], \  
B_{\varrho i}= \left[
\begin{array}{c}
\textbf{0}_{n_{i} \times p}    \\
I_{p}  
\end{array}
\right].  
\label{ClosedLoopErrorSystemMatrices}
\end{equation}

Before presenting the convergence of  $ \varrho_{i} $, the following assumption is made to facilitate  stability analysis. 

\vspace*{2pt} 
\begin{condition} \label{ObservableCondition}
The pair $ (A_{i}, C_{i}A_{i}) $ is observable. 	
\end{condition} 
 
\vspace*{2pt} 
\begin{proposition} \label{proposition0}
Under Assumption \ref{ObservableCondition}, there exists appropriate observer gain matrices $ L_{i} $ and $ M_{i} $ so that $ A_{\varrho i} $ in (\ref{ClosedLoopErrorSystemMatrices}) is Hurwitz, i.e., there exists a matrix $ P_{\varrho i} >0$ such that 
\vspace{-3pt} 
\begin{equation}
P_{\varrho i} A_{\varrho i} + A^{T}_{\varrho i}P_{\varrho i} = -Q_{\varrho i} \ \text{for any} \   Q_{\varrho i}>0. \label{HurwitzMatrix}
\end{equation} 
\end{proposition}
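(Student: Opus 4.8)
The plan is to recognize $A_{\varrho i}$ as the error-dynamics matrix of a Luenberger-type observer for an augmented plant, so that rendering it Hurwitz reduces to a standard observer pole-placement problem. To this end I would introduce the augmented matrices
\begin{equation}
\bar{A}_{i}=\left[\begin{array}{cc} A_{i} & \mathbf{0} \\ \mathbf{0} & \mathbf{0} \end{array}\right], \ \
\bar{C}_{i}=\left[\begin{array}{cc} C_{i} & I_{p} \end{array}\right], \ \
\bar{L}_{i}=\left[\begin{array}{c} L_{i} \\ M_{i} \end{array}\right],
\end{equation}
and verify by direct block multiplication that $A_{\varrho i}=\bar{A}_{i}-\bar{L}_{i}\bar{C}_{i}$, matching exactly the block entries of (\ref{ClosedLoopErrorSystemMatrices}). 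Consequently $A_{\varrho i}$ can be made Hurwitz by a suitable choice of $\bar{L}_{i}$ (equivalently of $L_{i}$ and $M_{i}$) precisely when the pair $(\bar{A}_{i},\bar{C}_{i})$ is detectable, and observability of this pair even permits arbitrary pole assignment.

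The crux, and the step I expect to be the main obstacle, is to show that Assumption \ref{ObservableCondition} --- observability of $(A_{i},C_{i}A_{i})$ --- transfers to observability of the augmented pair $(\bar{A}_{i},\bar{C}_{i})$. I would exploit the block-diagonal structure $\bar{A}_{i}^{k}=\text{diag}(A_{i}^{k},\mathbf{0})$ for $k\geq 1$, which gives $\bar{C}_{i}\bar{A}_{i}^{k}=[\,C_{i}A_{i}^{k}\ \ \mathbf{0}\,]$ for every $k\geq 1$, while the first block row is $\bar{C}_{i}=[\,C_{i}\ \ I_{p}\,]$. Stacking these rows forms the observability matrix of $(\bar{A}_{i},\bar{C}_{i})$, whose top-right block is $I_{p}$ and whose lower-left blocks are $C_{i}A_{i},C_{i}A_{i}^{2},\dots,C_{i}A_{i}^{n_{i}+p-1}$. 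A null-vector argument then shows that any $v=\text{col}(v_{1},v_{2})$ in its kernel must satisfy $C_{i}A_{i}^{k}v_{1}=\mathbf{0}$ for $k=1,\dots,n_{i}$ together with $v_{2}=-C_{i}v_{1}$; the first set of equations is exactly the observability matrix of $(A_{i},C_{i}A_{i})$ applied to $v_{1}$, so Assumption \ref{ObservableCondition} forces $v_{1}=\mathbf{0}$, and then $v_{2}=\mathbf{0}$. Hence the augmented observability matrix has full column rank $n_{i}+p$, i.e., $(\bar{A}_{i},\bar{C}_{i})$ is observable.

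With observability in hand, I would invoke the duality between controllability and observability (pole placement) to choose $\bar{L}_{i}$ so that $A_{\varrho i}=\bar{A}_{i}-\bar{L}_{i}\bar{C}_{i}$ has any prescribed Hurwitz spectrum, and then read off $L_{i}$ and $M_{i}$ from the partition of $\bar{L}_{i}$. Finally, since $A_{\varrho i}$ is Hurwitz, the standard Lyapunov theorem guarantees that for every $Q_{\varrho i}>0$ the matrix equation (\ref{HurwitzMatrix}) admits a unique solution $P_{\varrho i}>0$, which completes the argument. Only the observability reduction in the second paragraph uses the specific structure of the problem; the pole-placement construction and the Lyapunov-equation conclusion are routine and can be stated without detailed computation.
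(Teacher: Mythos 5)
Your proposal is correct and takes essentially the same approach as the paper: the same augmented pair $(\bar{A}_{i},\bar{C}_{i})$, the same reduction of its observability to that of $(A_{i},C_{i}A_{i})$ (Assumption \ref{ObservableCondition}) using $\bar{C}_{i}\bar{A}_{i}^{k}=[\,C_{i}A_{i}^{k}\ \ \mathbf{0}\,]$ for $k\geq 1$ together with Cayley--Hamilton, followed by pole placement and the standard Lyapunov-equation conclusion. The only differences are cosmetic: your null-vector argument spells out the rank equivalence that the paper merely displays, and your decomposition $A_{\varrho i}=\bar{A}_{i}-\bar{L}_{i}\bar{C}_{i}$ with $\bar{L}_{i}=\text{col}(L_{i},M_{i})$ actually fixes a sign slip in the paper's displayed factorization, whose injection gain $\text{col}(-L_{i},\,M_{i})$ fails to reproduce the signs in the second block row of $A_{\varrho i}$.
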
 
 
\begin{proof} 
according to (\ref{ClosedLoopErrorSystemMatrices}), we can obtain that 
%\vspace{-1pt} 
\begin{equation}
\hspace{-0.5em}
%A_{\varrho i} = 
\underbrace{ \left[
\begin{array}{cc}
A_{i}-L_{i}C_{i} & -L_{i} \\
-M_{i}C_{i} & -M_{i} 
\end{array}
\right]}_{A_{\varrho i}} = \underbrace{ \left[
\begin{array}{cc}
A_{i}  & \textbf{0}_{n_{i} \times p} \\
\textbf{0}_{p\times n_{i}}  & \textbf{0}_{p\times p} 
\end{array}
\right] }_{\bar{A}_{i}} + \left[
\begin{array}{c}
-L_{i}  \\
M_{i}  
\end{array}
\right]  
\underbrace{ \left[
\begin{array}{l}
C_{i} \  I_{p}  
\end{array}
\right]}_{\bar{C}_{i}}
. 
\notag 
\end{equation} 

\vspace{-2pt}
Then, it is observed that the matrix pair $ (\bar{A}_{i}, \bar{C}_{i} ) $ is observable if and only if $ \text{rank}  \left[  (\bar{C}_{i})^{T} \ (\bar{C}_{i} \bar{A}_{i})^{T} \ \cdots \ (\bar{C}_{i} \bar{A}_{i}^{n_{i}+p-1})^{T}  \right]^{T}  =n_{i}+p $, which is equivalent to the following expression
\vspace{-3pt} 
\begin{equation}
\text{rank}  \left[
\begin{array}{cc}
C_{i}A_{i}   \\
C_{i}A^{2}_{i}  \\
\vdots \\
C_{i}A^{n_{i}+p-1}_{i}
\end{array}
\right]
%\left[  (\bar{C}_{i})^{T} \ (\bar{C}_{i} \bar{A}_{i})^{T} \ \cdots \ (\bar{C}_{i} \bar{A}_{i}^{n_{i}+p-1})^{T}  \right] 
=n_{i} \Longleftrightarrow  \text{rank}  \left[
\begin{array}{cc}
C_{i}A_{i}   \\
C_{i}A^{2}_{i}  \\
\vdots \\
C_{i}A^{n_{i}}_{i}
\end{array}
\right]
%\left[  (\bar{C}_{i})^{T} \ (\bar{C}_{i} \bar{A}_{i})^{T} \ \cdots \ (\bar{C}_{i} \bar{A}_{i}^{n_{i}+p-1})^{T}  \right] 
=n_{i},   \label{HurwitzMatrixRank}
\end{equation}
where the second equation is guaranteed according to the Cayley-Hamilton theorem with $ A_{i} \in \mathbb{R}^{n_{i} \times n_{i}} $ and $ p \geq 1 $. 

Based on Assumption \ref{ObservableCondition}, we have 
\vspace{-3pt} 
\begin{equation}
\text{rank}   \left[ \left[  
\begin{array}{cc}
C_{i}  \\
C_{i}A_{i}  \\
\vdots \\
C_{i}A^{n_{i}-1}_{i}  
\end{array}
\right] 
A_{i} 
\right] 
=\text{rank} \left[
\begin{array}{cc}
C_{i} A_{i}  \\
C_{i}A_{i} A_{i} \\
\vdots \\
C_{i}A^{n_{i}-1}_{i} A_{i}
\end{array}
\right]  
=n_{i}.   \label{HurwitzMatrixRank1}
\end{equation}

Thus, $ L_{i} $ and $ M_{i} $ can be chosen such that $ A_{\varrho i} $ is Hurwitz. That is, (\ref{HurwitzMatrix}) is obtained, and the proof is completed.	
\end{proof}

\vspace{3pt}
Next, we present the global exponential convergence of $ \varrho_{i}=\text{col}  (\tilde{x}_{i}(t),\tilde{y}^{a}_{i}(t))$ in the closed-loop error system  (\ref{ClosedLoopErrorSystem}). 

\vspace{2pt} 
\begin{proposition} \label{proposition1}
Suppose that Assumptions \ref{SwitchingGraphs}-\ref{ObservableCondition} hold. For leader -follower multi-agent systems in (\ref{FollowerDynamics})-(\ref{FormationDynamics}) subject to sensor attacks in (\ref{SensorFDIModel}), the global exponential convergence of  $ \tilde{x}_{i}(t) $ and $ \tilde{y}^{a}_{i}(t) $ can be guaranteed under the proposed resilient distributed controller in (\ref{ResilientDistributedSensorAlgorithm})-(\ref{SensorAdaptive}), provided that  $ f^{a}_{i}(t) $ is designed as 
\vspace{-3pt} 
\begin{align}
f^{a}_{i}(t)&= \text{diag} \left\lbrace   \bar{\varrho}_{\vartheta ij} \right\rbrace \hat{\varepsilon}_{i}(t),  \ 
 \bar{\varrho}_{\vartheta ij} = \bar{\varrho}_{ij}/ \sqrt{\bar{\varrho}^{2}_{ij}+\vartheta^{2}_{ij}(t)}, \notag \\ 
 \dot{\hat{\varepsilon}}_{i}(t) &= \text{diag} \{ \bar{\varrho}_{\vartheta ij} \} \bar{\varrho}_{i},  \ \bar{\varrho}_{i}= B^{T}_{\varrho i} \bar{C}^{T}_{i} \bar{C}_{i} P_{i} \bar{C}^{T}_{i} \bar{C}_{i} \varrho_{i},  
% \ i \in \mathcal{V}, 
  \label{SensorCompensationSignal}
\end{align}
where $ P_{i} >0 $ is to be determined later,  $ \hat{\varepsilon}_{i} $ denotes the estimate of $ \varepsilon_{i}= \text{sup}_{t\geq 0} | \phi^{a}_{i} \dot{y}^{a}_{i}(t)| $, $ \bar{\varrho}_{ij} $ is the $ j $th element of $ \bar{\varrho}_{i} $, and  $\vartheta_{ij}(t)>0 $ is an integrable function such that $ \int_{0}^{t} \vartheta_{ij}(s)ds \leq \vartheta^{*} $ for certain scalar $\vartheta^{*}>0$, i.e., $ \vartheta_{ij}(t)=\text{exp}(-\vartheta_{0ij}t) $ with a scalar $ \vartheta_{0ij}>0 $. 
\end{proposition}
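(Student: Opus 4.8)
The plan is to certify convergence with a composite Lyapunov function that augments the quadratic energy of $\varrho_i$ by a quadratic penalty on the adaptive estimation error, and to let the online gain $\hat\varepsilon_i$ learn the unknown derivative bound. First I would record that the compensator is implementable: since $\bar C_i\varrho_i = C_i\tilde x_i + \tilde y^a_i = \tilde y_i$ is precisely the measurable residual in (\ref{ResilientIndex}), the regressor $\bar\varrho_i$ and hence $f^a_i(t)$ in (\ref{SensorCompensationSignal}) depend on the error state only through $\tilde y_i$. I would then take $P_i = P_{\varrho i}$ as in Proposition \ref{proposition0}, chosen so that $P_i B_{\varrho i}$ lies in the range of $\bar C_i^T$ (an SPR/matching condition that makes the gradient $B^T_{\varrho i}P_i\varrho_i$ collapse onto the measurable $\bar\varrho_i$), and propose $V_i(t) = \varrho_i^T(t) P_i\varrho_i(t) + \sum_{j=1}^{p}\tilde\varepsilon_{ij}^2(t)$ with $\tilde\varepsilon_i = \varepsilon_i - \hat\varepsilon_i$, where $\varepsilon_i = \sup_{t\ge 0}|\phi^a_i\dot y^a_i(t)|$ is finite by Assumption \ref{AttacksCondition}.

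Next I would differentiate $V_i$ along (\ref{ClosedLoopErrorSystem}). The Lyapunov identity (\ref{HurwitzMatrix}) yields the negative-definite core $-\varrho_i^T Q_{\varrho i}\varrho_i$, while the attack and its compensation enter through the matched term $2\varrho_i^T P_i B_{\varrho i}[\phi^a_i\dot y^a_i - f^a_i]$, which the matching condition rewrites as $2\bar\varrho_i^T[\phi^a_i\dot y^a_i - f^a_i]$. Bounding $\phi^a_i\dot y^a_i$ componentwise by $\varepsilon_{ij}$ and substituting $[f^a_i]_j = \bar\varrho_{\vartheta ij}\hat\varepsilon_{ij}$, the $\hat\varepsilon_{ij}$-dependent part is exactly cancelled by the adaptation term $-2\tilde\varepsilon_i^T\dot{\hat\varepsilon}_i = -2\tilde\varepsilon_i^T\,\text{diag}\{\bar\varrho_{\vartheta ij}\}\bar\varrho_i$, leaving the clean residual $2\sum_j\varepsilon_{ij}(|\bar\varrho_{ij}| - \bar\varrho_{ij}\bar\varrho_{\vartheta ij})$.

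I would then dispose of the boundary-layer softening with the elementary estimate $|z| - z^2/\sqrt{z^2+\vartheta^2}\le\vartheta$ for $\vartheta>0$. Applied to each $\bar\varrho_{ij}$ this gives $\dot V_i(t)\le -\varrho_i^T Q_{\varrho i}\varrho_i + 2\sum_j\varepsilon_{ij}\vartheta_{ij}(t)$. Because $\vartheta_{ij}(t) = \text{exp}(-\vartheta_{0ij}t)$ satisfies $\int_0^t\vartheta_{ij}(s)\,ds\le\vartheta^*$, the perturbation $\iota_i(t) := 2\sum_j\varepsilon_{ij}\vartheta_{ij}(t)$ is integrable and itself exponentially decaying; integrating shows $V_i$ is uniformly bounded, hence so are $\varrho_i$ and $\hat\varepsilon_i$, and since $\dot{\hat\varepsilon}_{ij} = \bar\varrho_{ij}\bar\varrho_{\vartheta ij}\ge 0$ each $\hat\varepsilon_{ij}$ is non-decreasing and therefore converges.

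The step I expect to be the main obstacle is upgrading this from asymptotic to \emph{global exponential} convergence of $\varrho_i$, because $\tilde\varepsilon_i$ is only shown bounded rather than vanishing, so $V_i$ need not decay exponentially. To close the gap I would exploit that both the boundary layer $\vartheta_{ij}$ and the residual compensation mismatch decay exponentially: once $\hat\varepsilon_i$ has converged, the $\varrho_i$-subsystem reduces to $\dot\varrho_i = A_{\varrho i}\varrho_i + d_i(t)$ with $A_{\varrho i}$ Hurwitz (Proposition \ref{proposition0}) and $\|d_i(t)\|$ exponentially bounded, so a variation-of-constants (comparison-lemma) argument delivers $\|\varrho_i(t)\|\le c_1 e^{-c_2 t}$ for some $c_1,c_2>0$. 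Since $\varrho_i = \text{col}(\tilde x_i,\tilde y^a_i)$, this establishes the claimed global exponential convergence of both $\tilde x_i(t)$ and $\tilde y^a_i(t)$.
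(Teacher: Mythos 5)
Your proposal tracks the paper's own proof essentially step by step up to the point where boundedness is established. The paper also begins by noting $\tilde y_i=\bar C_i\varrho_i$ is the measurable residual, uses the composite function $W_i=(\bar C_i^T\bar C_i\varrho_i)^TP_i(\bar C_i^T\bar C_i\varrho_i)+\tilde\varepsilon_i^T\tilde\varepsilon_i$ --- your SPR/matching requirement that $P_iB_{\varrho i}$ lie in the range of $\bar C_i^T$ is precisely what the paper's structured choice $P_{\varrho i}=\bar C_i^T\bar C_iP_i\bar C_i^T\bar C_i$ delivers, since then $\bar\varrho_i=B^T_{\varrho i}P_{\varrho i}\varrho_i$ is a function of $\tilde y_i$ alone --- and it performs the same cancellation of the $\hat\varepsilon_i$-terms by the adaptation law, the same boundary-layer estimate $|z|-z^2/\sqrt{z^2+\vartheta^2}\le\vartheta$, and the same integration of $\dot W_i\le-\lambda_{\min}(Q_{\varrho i})\varrho_i^T\varrho_i+\sum_j\vartheta_{ij}(t)\varepsilon_{ij}$ to conclude that all signals are bounded. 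Up to here your argument and the paper's coincide.

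The genuine gap is your closing step. You claim that after $\hat\varepsilon_i$ converges, the error system is $\dot\varrho_i=A_{\varrho i}\varrho_i+d_i(t)$ with $\|d_i(t)\|$ exponentially bounded, and then invoke variation of constants. But $d_i(t)=B_{\varrho i}\bigl(\phi^a_i\dot y^a_i(t)-f^a_i(t)\bigr)$ is only \emph{bounded}, not decaying: Assumption \ref{AttacksCondition} gives no decay of $\dot y^a_i$, and $f^a_i$ does not cancel the attack derivative pointwise --- whenever $\bar\varrho_i$ passes near zero, $f^a_i\approx\textbf{0}$ while $\phi^a_i\dot y^a_i$ may sit at its full magnitude $\varepsilon_i$. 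The domination you established earlier, $\bar\varrho_{ij}(\phi^a_i\dot y^a_i)_j-\varepsilon_{ij}\bar\varrho_{ij}^2/\sqrt{\bar\varrho_{ij}^2+\vartheta_{ij}^2}\le\varepsilon_{ij}\vartheta_{ij}$, holds only after projecting onto $\bar\varrho_i$ inside $\dot W_i$; it is not a pointwise bound on the mismatch. Moreover, monotone convergence of $\hat\varepsilon_{ij}$ is to \emph{some} limit, not to $\varepsilon_{ij}$, and even exact knowledge of $\varepsilon_i$ would not make $\phi^a_i\dot y^a_i-\mathrm{diag}\{\bar\varrho_{\vartheta ij}\}\varepsilon_i$ vanish, since the compensator follows the sign structure of $\bar\varrho_i$, not of $\dot y^a_i$. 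A Hurwitz system driven by a bounded, non-vanishing input yields only uniform ultimate boundedness, so your route cannot produce the exponential rate. The paper closes differently: it keeps the adaptation error out of the decay estimate entirely by working with the integral inequality (\ref{OL4}), $\lambda_{\min}(P_{\varrho i})\varrho_i^T\varrho_i\le d_{i0}-\int_0^t\lambda_{\min}(Q_{\varrho i})\varrho_i^T\varrho_i\,ds$ (the bounded $\tilde\varepsilon_i$ enters only through the constant $d_{i0}$), and then applies the Bellman--Gronwall lemma to extract $\|\varrho_i\|^2\le\sqrt{d_{i0}/\lambda_{\min}(P_{\varrho i})}\,\mathrm{exp}\bigl(-\lambda_{\min}(Q_{\varrho i})t/\lambda_{\min}(P_{\varrho i})\bigr)$. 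If you want your write-up to match the claimed conclusion, you should replace your variation-of-constants step with this Gronwall argument on the integrated Lyapunov inequality.
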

 
\vspace{2pt} 
\begin{proof}
it follows from (\ref{ResilientIndex}) and (\ref{ErrorTerms}) that  
\vspace{-3pt} 
\begin{equation}
\tilde{y}_{i}(t)=y^{c}_{i}(t)- \hat{y}^{a}_{i}(t)- \hat{y}_{i}(t)= C_{i}\tilde{x}_{i}+\tilde{y}^{a}_{i}(t)= \bar{C}_{i} \varrho_{i}.  \notag 
%\label{ErrorMeasurement}
\end{equation} 

Substituting (\ref{SensorCompensationSignal}) into (\ref{ClosedLoopErrorSystem}) gives rise to 
\vspace{-3pt} 
\begin{equation}
\dot{\varrho}_{i}(t)=A_{\varrho i} \varrho_{i}(t)+B_{\varrho i}  [ \phi^{a}_{i} \dot{y}^{a}_{i}(t)- \text{diag} \left\lbrace   \frac{\bar{\varrho}_{ij}} {\sqrt{\bar{\varrho}^{2}_{ij}+\vartheta^{2}_{ij}(t)}} \right\rbrace \hat{\varepsilon}_{i}(t)]. \notag  
% \label{ObserverClosedLoopErrorSystem}
\end{equation}

Consider the following Lyapunov functional candidate 
\vspace{-3pt} 
\begin{equation}
W_{i}(t)= \left( \bar{C}^{T}_{i} \bar{C}_{i}  \varrho_{i} \right) ^{T}  P_{i} \left(  \bar{C}^{T}_{i} \bar{C}_{i}  \varrho_{i} \right) + \tilde{\varepsilon}^{T}_{i}(t)\tilde{\varepsilon}_{i}(t),\label{ObserverLypunovFunction}
\end{equation} 
where $ P_{i} >0$ can be selected such that $  A_{\varrho i} P_{\varrho i}+ P_{\varrho i} A^{T}_{\varrho i} = -Q_{\varrho i} $ in (\ref{HurwitzMatrix}) holds with $ P_{\varrho i}= \bar{C}^{T}_{i} \bar{C}_{i}P_{i} \bar{C}^{T}_{i} \bar{C}_{i} $, $\bar{C}_{i} = [ 
C_{i},  I_{p}  ] $,  and  $ \tilde{\varepsilon}_{i} =\varepsilon_{i} -\hat{\varepsilon}_{i} $ denotes the estimated error.
% in (\ref{HurwitzMatrix})

\vspace{1pt}
Then, the time derivative of $ W_{i}(t) $ along the trajectories of the closed-loop error system (\ref{ClosedLoopErrorSystem}) is described by 
%\vspace{1pt} 
\begin{align}
\dot{W}_{i}(t) &= 2 \varrho^{T}_{i}(t)  P_{\varrho i} A_{\varrho i} \varrho_{i}(t) +2 \varrho^{T}_{i}(t)  P_{\varrho i}  B_{\varrho i}  [ \phi^{a}_{i} \dot{y}^{a}_{i}(t) \notag \\
& \ \ \ - \text{diag} \left\lbrace   \frac{\bar{\varrho}_{ij}} {\sqrt{\bar{\varrho}^{2}_{ij}+\vartheta^{2}_{ij}(t)}} \right\rbrace \hat{\varepsilon}_{i}(t) ]
- 2 \tilde{\varepsilon}^{T}_{i}(t) \dot{\hat{\varepsilon}}_{i}(t) \notag \\
&= \varrho^{T}_{i}(t) (A_{\varrho i} P_{\varrho i}+ P_{\varrho i} A^{T}_{\varrho i}) \varrho_{i}(t)+ 2 \varrho^{T}_{i}(t)  P_{\varrho i}  B_{\varrho i}  \notag \\
& \ \ \ \times[ \phi^{a}_{i} \dot{y}^{a}_{i}(t)  - \text{diag} \left\lbrace   \frac{\bar{\varrho}_{ij}} {\sqrt{\bar{\varrho}^{2}_{ij}+\vartheta^{2}_{ij}(t)}} \right\rbrace \hat{\varepsilon}_{i}(t) ] \notag \\
& \ \ - (\varepsilon_{i} -\hat{\varepsilon}_{i})^{T} \text{diag} \left\lbrace   \frac{ 2 \bar{\varrho}_{ij}} {\sqrt{\bar{\varrho}^{2}_{ij}+\vartheta^{2}_{ij}(t)}} \right\rbrace    B^{T}_{\varrho i}  P_{\varrho i} \varrho_{i}(t). 
\label{OL1}
\end{align}

Let $\varepsilon_{ij} $ represent the $ j $th element of the vector $ \varepsilon_{i}$. Since $ |\bar{\varrho}_{ij}|-\bar{\varrho}_{ij}   \bar{\varrho}_{ij} / \sqrt{\bar{\varrho}^{2}_{ij}+\vartheta^{2}_{ij}(t)}  \leq \vartheta_{ij}(t) $, we have  
\vspace*{-6pt} 
\begin{align} 
%\hspace{-0.5em}	
\dot{W}_{i}(t)
& =  -  \varrho^{T}_{i} Q_{\varrho i} \varrho_{i}  + 2\bar{\varrho}^{T}_{i}  [\phi^{a}_{i} \dot{y}^{a}_{i}(t) - \text{diag} \{     \frac{\bar{\varrho}_{ij}} {\sqrt{\bar{\varrho}^{2}_{ij}+\vartheta^{2}_{ij}(t)}}\} \varepsilon_{i} ] \notag \\
& \leq  -  \varrho^{T}_{i} Q_{\varrho i} \varrho_{i}  + \sum^{n}_{j=1}   [  |\bar{\varrho}_{ij}| \varepsilon_{ij}-  \bar{\varrho}^{2}_{ij} \varepsilon_{ij} / \sqrt{\bar{\varrho}^{2}_{ij}+\vartheta^{2}_{ij}(t)}  ]    \notag \\ 
& \leq - \lambda_{min}(Q_{\varrho i}) \varrho^{T}_{i}  \varrho_{i} + \sum^{n}_{j=1} \vartheta_{ij}(t) \varepsilon_{ij}. \label{OL2} 
\end{align}
     
\vspace{-1pt}  
Integrating both the sides of (\ref{OL2}) gives rise to 
$W_{i}(t) \leq W_{i}(0)- \int_{0}^{t} \lambda_{min}(Q_{\varrho i}) \varrho^{T}_{i}  \varrho_{i} ds+ I_{\vartheta} $, 
%\vspace{-4pt} 
%\begin{equation}
%W_{i}(t) \leq W_{i}(0)- \int_{0}^{t} \lambda_{min}(Q_{\varrho i}) \varrho^{T}_{i}  \varrho_{i} ds+ I_{\vartheta}, \label{OL3}
%\end{equation} 
where $ I_{\vartheta}= \int_{0}^{t} \sum^{n}_{j=1} \vartheta_{ij}(s) \varepsilon_{ij} ds $. By Assumption \ref{AttacksCondition}, we obtain that $\varepsilon_{ij} \leq  \varepsilon_{max} =\max_{i \in \mathcal{V}}\{ \|\phi^{a}_{i} \dot{y}^{a}_{i}(t) \| \} $ holds. In addition, recalling the property of the integrable function $ \vartheta_{ij}(t) $ with  $ \int_{0}^{t} \vartheta_{ij}(s)ds \leq \vartheta^{*} $
%, i.e., $ \delta_{ij}(t)=\text{exp}(-\delta_{0ij}t) $
for  $\vartheta^{*}>0$, we have $ |I_{\vartheta}| \leq c_{0} $  for certain scalar $ c_{0}>0 $.   
Thus, the above inequality implies that all signals in $ W_{i}(t) $ are bounded. Denote $ d_{i0}=W_{i}(0)+c_{0} $,  
\vspace{-4pt} 
\begin{equation}
\lambda_{min}(P_{\varrho i}) \varrho^{T}_{i} \varrho_{i} \leq W_{i}(t) \leq  - \int_{0}^{t} \lambda_{min}(Q_{\varrho i}) \varrho^{T}_{i} \varrho_{i}  ds+ d_{i0}. \label{OL4}
\end{equation} 
   	
Recalling the Bellman-Gronwall Lemma in \cite{BookLewis}, (\ref{OL4}) is further described by $ \|\varrho_{i}\|^{2} \leq \sqrt{ \frac{d_{i0}}{\lambda_{min}(P_{\varrho i})} } \text{exp} \left( - \frac{ \lambda_{min}(Q_{\varrho i}) }{\lambda_{min}(P_{\varrho i})}  t  \right) $.
%\vspace{-4pt} 
%\begin{equation}
%\|\varrho_{i}\|^{2} \leq \sqrt{ \frac{d_{0}}{\lambda_{min}(P_{\varrho i})} } \text{exp} \left( - \frac{ \lambda_{min}(Q_{\varrho i}) }{\lambda_{min}(P_{\varrho i})}  t  \right) . \label{OL5}
%\end{equation} 
Hence, it is concluded that both $ \tilde{x}_{i}(t) $ and $ \tilde{y}^{a}_{i}(t) $ can globally exponentially converge towards zero, and the proof is finished.	
\end{proof}	 
  
%According to the results in Theorem \ref{theorem1} and Proposition \ref{proposition1}, we are now ready to present the resilient time-varying output formation tracking result, which is summarized as follows. 

\vspace{3pt}
Next, we are ready to present the resilient time-varying output formation tracking result, which is summarized below. 

\vspace{2pt}
\begin{theorem} \label{theorem2} 
Consider the heterogeneous leader-follower multi -agent system (\ref{FollowerDynamics})-(\ref{FormationDynamics}) 
subject to FDI sensor attacks and unreliable digraphs. Suppose that Assumptions \ref{SwitchingGraphs}-\ref{ObservableCondition} hold. Under the proposed resilient algorithm in (\ref{ResilientDistributedSensorAlgorithm})-(\ref{SensorAdaptive}), the time-varying output formation tracking can be achieved exponentially, provided that the observer gain matrices $ L_{i} $, $ M_{i} $ are selected such that $ A_{\varrho i} $ in (\ref{ClosedLoopErrorSystemMatrices}) is Hurwitz, the controller gain matrix $ K_{1i} $ is chosen such that $ A_{i}+B_{i}K_{1i} $ is Hurwitz, and $ K_{2i}, K_{3i} $ are designed as    
\vspace{-2pt} 
\begin{equation}
K_{2i}=U_{i} -K_{1i}X_{i}, \ K_{3i}=U_{hi} -K_{1i}X_{hi}, \label{FormationTrackingCondition}
\end{equation} 
where $ (X_{i}, U_{i}) $ and $ (X_{hi}, U_{hi}) $ are the solution of the well-known regulated equations in (\ref{SystemDynamicRegulatedEquation}) and (\ref{FormationRegulatorEquation}), respectively.
%the following well-known regulated equation
%\vspace{-3pt} 
%\begin{equation}
%X_{i}A_{0}=A_{i}X_{i}+B_{i}U_{i}, \ C_{0}=C_{i}X_{i}. \label{DynamicRegulatedEquation}
%\end{equation}
\end{theorem}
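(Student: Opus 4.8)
The plan is to reduce the formation-tracking objective to the exponential stability of an auxiliary state error that is driven only by the two estimation errors already shown to vanish. First I would introduce the state tracking error $e_i(t)=x_i(t)-X_ix_0(t)-X_{hi}h_i(t)$. Using the output relations $C_0=C_iX_i$ and $C_{hi}=C_iX_{hi}$ from (\ref{SystemDynamicRegulatedEquation})--(\ref{FormationRegulatorEquation}), the output formation tracking error in Problem \ref{Problem} becomes $y_i-y_{hi}-y_0=C_ie_i$, so it suffices to prove that $e_i\to\mathbf{0}$ exponentially.

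Next I would differentiate $e_i$ along (\ref{FollowerDynamics}), (\ref{LeaderDynamics}) and (\ref{FormationDynamics}), and use the two regulator equations (\ref{SystemDynamicRegulatedEquation}) and (\ref{FormationRegulatorEquation}) to cancel the drift terms $X_iA_0x_0$ and $X_{hi}A_{hi}h_i$, yielding $\dot{e}_i=A_ie_i+B_i(u_i-U_ix_0-U_{hi}h_i)$. Substituting the controller (\ref{ResilientDistributedSensorAlgorithm}) together with the gain choices (\ref{FormationTrackingCondition}), and then rewriting $\hat{x}_i=x_i-\tilde{x}_i$ (with $\tilde{x}_i$ from (\ref{ErrorTerms})) and $\zeta_i=x_0+\tilde{\xi}_i$ (with $\tilde{\xi}_i$ the estimator error of Theorem \ref{theorem1}), the feedthrough term collapses to $K_{1i}e_i$ plus purely estimation-error contributions, producing the cascade
\begin{equation}
\dot{e}_i=(A_i+B_iK_{1i})e_i-B_iK_{1i}\tilde{x}_i+B_iK_{2i}\tilde{\xi}_i. \notag
\end{equation}
The algebraic heart of this step is that the prescribed gains $K_{2i}=U_i-K_{1i}X_i$ and $K_{3i}=U_{hi}-K_{1i}X_{hi}$ are exactly what force $U_ix_0$ and $U_{hi}h_i$ to drop out up to terms proportional to $\tilde{x}_i$ and $\tilde{\xi}_i$.

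Finally, since $K_{1i}$ is chosen so that $A_i+B_iK_{1i}$ is Hurwitz, the homogeneous part of the cascade is exponentially stable. Theorem \ref{theorem1} guarantees that $\|\tilde{\xi}_i\|$ decays exponentially under the switching network subject to Assumption \ref{SwitchingGraphs}, while Proposition \ref{proposition1} guarantees that $\|\tilde{x}_i\|$ (and $\tilde{y}^a_i$) decays exponentially despite the unknown unbounded FDI attacks. I would then conclude, via the variation-of-constants formula and a bound on the convolution of the Hurwitz transition matrix against the exponentially decaying inputs $\tilde{x}_i$ and $\tilde{\xi}_i$, that $e_i\to\mathbf{0}$ exponentially; hence $C_ie_i\to\mathbf{0}$ and the output formation tracking is achieved exponentially.

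I expect the only genuine subtlety, as opposed to the routine regulator-equation bookkeeping of the second step, to lie in this last cascade estimate: the decay rates of $\tilde{x}_i$, $\tilde{\xi}_i$ and $e^{(A_i+B_iK_{1i})t}$ may coincide and yield a polynomial-times-exponential bound in the convolution. This is harmless, since selecting any convergence rate strictly smaller than all three absorbs the polynomial factor and preserves the exponential convergence of $e_i$, and thus of the output formation tracking error.
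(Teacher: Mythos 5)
Your proposal is correct and follows essentially the same route as the paper's own proof: the same coordinate $\bar{x}_{i}(t)=x_{i}(t)-X_{i}x_{0}(t)-X_{hi}h_{i}(t)$, the same regulator-equation cancellation yielding the cascade $\dot{\bar{x}}_{i}=(A_{i}+B_{i}K_{1i})\bar{x}_{i}-B_{i}K_{1i}\tilde{x}_{i}+B_{i}K_{2i}\tilde{\xi}_{i}$, and the same appeal to Theorem \ref{theorem1}, Proposition \ref{proposition1}, and $C_{i}X_{i}=C_{0}$, $C_{i}X_{hi}=C_{hi}$ to conclude. Your closing observation about possibly coinciding decay rates in the convolution (handled by accepting any strictly slower rate) is a detail the paper leaves implicit, but it does not alter the argument.
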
 

\vspace{3pt}  
\begin{proof}
define a formation transformation coordinate as 
\vspace{-3pt} 
\begin{equation}
\bar{x}_{i}(t)=x_{i}(t)-X_{i}x_{0}(t)-X_{hi}h_{i}(t). \label{FormationTrackingError}
\end{equation}

Then, the time derivative of $ \bar{x}_{i}(t) $ is described as  
\vspace{-3pt} 
\begin{align}
\dot{\bar{x}}_{i}(t)&= A_{i}x_{i}(t)+B_{i}u_{i}(t)-X_{i}A_{0}x_{0}(t)-X_{hi}A_{hi}h_{i}(t) \notag \\
&= A_{i}x_{i}(t) +B_{i}(K_{1i}\hat{x}_{i}(t)+K_{2i}\zeta_{i}(t)+K_{3i}h_{i}(t)) \notag \\
& \ \ \ -X_{i}A_{0}x_{0}(t)-X_{hi}A_{hi}h_{i}(t)  \notag \\
&= (A_{i} +B_{i}K_{1i} )x_{i}(t) - B_{i}K_{1i} \tilde{x}_{i}(t)  +B_{i}K_{2i}\zeta_{i}(t) \notag \\
& \ \ \  +B_{i}K_{3i}h_{i}(t)-X_{i}A_{0}x_{0}(t)-X_{hi}A_{hi}h_{i}(t).
\label{FTE1}
\end{align}

Substituting (\ref{SystemDynamicRegulatedEquation}) and (\ref{FormationRegulatorEquation}) into (\ref{FTE1}) further gives rise to 
\vspace{-3pt} 
\begin{align}
\dot{\bar{x}}_{i}(t)&=  (A_{i} +B_{i}K_{1i} ) (\bar{x}_{i}(t)+X_{i}x_{0}(t)+X_{hi}h_{i}(t) )    \notag \\
& \ \ \ - B_{i}K_{1i}\tilde{x}_{i}(t) +B_{i}K_{2i}\zeta_{i}(t)  +B_{i}K_{3i}h_{i}(t) \notag \\
& \ \ \ - (A_{i}X_{i}+B_{i}U_{i}) x_{0}(t)- (A_{i}X_{hi}+B_{i}U_{hi} ) h_{i}(t) \notag \\
& = (A_{i} +B_{i}K_{1i} ) \bar{x}_{i}(t)  +B_{i}K_{1i} X_{i}x_{0}(t)+B_{i}K_{1i}X_{hi}h_{i}(t) \notag \\
& \ \ \ - B_{i}K_{1i}\tilde{x}_{i}(t) +B_{i}K_{2i}\zeta_{i}(t) + B_{i}K_{3i}h_{i}(t)  \notag \\
& \ \ \ -B_{i}U_{i}x_{0}(t) -B_{i}U_{hi}h_{i}(t).  
\label{FTE2}
\end{align}

Choose $ K_{2i}=U_{i} -K_{1i}X_{i}$  and $ K_{3i}=U_{hi} -K_{1i}X_{hi} $ in (\ref{FormationTrackingCondition}). The equation in (\ref{FTE2}) is further rewritten as 
\vspace{-3pt} 
\begin{align}
\dot{\bar{x}}_{i}(t)&=  (A_{i} +B_{i}K_{1i} ) \bar{x}_{i}(t)  +B_{i}K_{1i} X_{i}x_{0}(t)+B_{i}K_{1i}X_{hi}h_{i}(t) \notag \\
& \ \ \ - B_{i}K_{1i}\tilde{x}_{i}(t) +B_{i} (U_{i} -K_{1i}X_{i}) \zeta_{i}(t) -B_{i}U_{i}x_{0}(t)   \notag \\
& \ \ \ + B_{i}(U_{hi} -K_{1i}X_{hi}) h_{i}(t) -B_{i}U_{hi}h_{i}(t)    \notag \\
& =(A_{i} +B_{i}K_{1i} ) \bar{x}_{i}(t)- B_{i}K_{1i}\tilde{x}_{i}(t) + B_{i} K_{2i} \tilde{\xi}_{i}(t). 
\label{FTE3}
\end{align}

\vspace*{-4pt}
It follows from Theorem \ref{theorem1} that  $ \tilde{\xi}_{i}(t) $ exponentially converges to zero and from Proposition \ref{proposition1} that $ \tilde{x}_{i}(t) $  exponentially converges to zero. Moreover, $ K_{1i} $ is selected so that  $ A_{i}+B_{i}K_{1i} $ is Hurwitz. Thus, it is concluded from (\ref{FTE3}) that $ \bar{x}_{i}(t) $ can converge to zero exponentially, i.e.,   
$ \lim_{t\rightarrow \infty} \bar{x}_{i}(t) = \textbf{0} $ exponentially.

\vspace{2pt} 
Denote the time-varying output formation tracking error as 
\vspace{-3pt} 
\begin{align}
\hspace{-1.1em}
e_{i}(t)&= y_{i}(t)-y_{hi}(t)-y_{0}(t) 
= C_{i}x_{i}(t)-C_{hi}h_{i}(t)-C_{0}x_{0}(t) \notag \\
&=C_{i} [ \bar{x}_{i}(t)+X_{i}x_{0}(t)+X_{hi}h_{i}(t) ] -C_{hi}h_{i}(t)-C_{0}x_{0}(t)  \notag \\
&=C_{i} \bar{x}_{i}(t) + (C_{i}X_{i}-C_{0} )x_{0}(t) + (C_{i}X_{hi}-C_{hi}) h_{i}(t).  \label{OutputFormationTrackingError} 
%&=C_{i} \bar{x}_{i}(t), 
\end{align}

Due to the fact that $ C_{i}X_{i}=C_{0}  $ and $ C_{i}X_{hi}=C_{hi} $, then we can have $ e_{i}(t)=C_{i} \bar{x}_{i}(t) $. Since $ \lim_{t\rightarrow \infty} \bar{x}_{i}(t) = \textbf{0} $ exponentially, it is concluded that $ \lim_{t\rightarrow \infty} e_{i}(t) = \textbf{0} $ exponentially. Hence, the global exponential time-varying output formation tracking is achieved for the heterogeneous linear multi-agent systems in the presence of FDI sensor attacks and unreliable digraphs.
\end{proof}

\vspace*{-3pt} 
\section{Resilient Time-Varying Formation-Containment Tracking with Multiple Leaders} 
\vspace*{-2pt}
Consider a linear multi-agent network consisted of $ N $ followers and $ M $ leaders with their dynamics described by  
\vspace*{-2pt}    
\begin{equation} 
\left\{
\begin{array}{c} 
\dot{x}_{i}(t)=A_{i}x_{i}(t)+B_{i}u_{i}(t), \ y_{i}(t)=C_{i}x_{i}(t),  \ i \in \mathbb{F},  \\
\hspace{-3.5em}
\dot{x}_{k}(t)=A_{0}x_{k}(t), \ y_{k}(t)=C_{0}x_{k}(t), \ k \in \mathbb{L},
\end{array}%
\right.  
\label{FCDynamcis}
\end{equation}
where $ \mathbb{F}=\{1,2,\cdots,N \} $ and $ \mathbb{L}=\{N+1,N+2,\cdots,N+M \} $ are the sets of the followers and leaders, respectively,
and 
%$ x_{i} \in \mathbb{R}^{n_{i}} $, $u_{i} \in \mathbb{R}^{m_{i}} $, and $ y_{i} \in \mathbb{R}^{p} $ are the state, control input and output of the $ i $th follower, respectively, 
$ x_{k} $ and $ y_{k} $ are the state and output of the $ k $th leader, respectively. 
%and $ A_{i} \in \mathbb{R}^{n_{i} \times n_{i}}$, $B_{i} \in \mathbb{R}^{n_{i} \times m_{i}}$,  $C_{i} \in \mathbb{R}^{p \times n_{i}}$, $ A_{0} \in \mathbb{R}^{r \times r}, C_{0} \in \mathbb{R}^{p\times r} $.

In the networked agent team, the leaders do not have incoming edges and the followers have the relative neighboring information. 
%Next, we define the well-informed and unwell-informed followers.
The well-informed and unwell-informed followers are defined. 

\vspace*{2pt}
\begin{definition}  \cite{Dong17TAC}
A follower is called the well-informed one if it has  incoming edges from all leaders, and the uniformed follower if it has no incoming edges from any leaders.
\end{definition}

\vspace*{2pt}
\begin{definition}\cite{BookFellar}
A set $ \mathcal{C} \subseteq \mathbb{R}^{n} $ can be said to be convex if $ (1-\lambda)x+\lambda y \in \mathcal{C}$ holds for any $ x, y \in \mathcal{C} $ and $ \lambda \in [0,1] $. Then, the convex hull of a finite set of points $Z=\{z_{1},z_{2},\cdots,z_{n} \}$ is the minimal convex set containing all points in $ Z $, i.e., $\text{Co}(Z)=\{\sum^{n}_{k=1}\gamma_{k} z_{k} | z_{k} \in Z, \gamma_{k} \in \mathbb{R}, \gamma_{k} \geq 0, \sum^{n}_{k=1}\gamma_{k}=1 \} $. That is, $ \sum^{n}_{k=1}\gamma_{k} z_{k} $ is the convex combination of $ z_{k} $.
\end{definition}

\vspace*{3pt}
\begin{problem} \label{problem2}
The multi-agent system (\ref{FCDynamcis}) subject to unreliable digraphs and unknown and unbounded sensor attacks in (\ref{SensorFDIModel}) is said to achieve  resilient time-varying formation-containment tracking if there exist a scalar $ \gamma_{k}>0$, $k\in \mathbb{L} $ that satisfies $ \sum_{k=N+1}^{N+M} \gamma_{k}=1$ such that for any initial states, the closed-loop system is globally stable and the output formation-containment tracking satisfies    
\vspace*{-3pt}
\begin{equation} 
\underset{t\rightarrow \infty }{\text{lim}}\left( y_{i}\left( t\right)- y_{hi}(t) - \sum_{k=N+1}^{N+M} \gamma_{k} y_{k}(t) \right) = \textbf{0}, \ i\in \mathbb{F}.  
\label{FCproblem}
\end{equation}
\end{problem}

Similar to Assumption \ref{SwitchingGraphs}, we make another  assumption. 

%\vspace*{2pt}
\begin{condition} \label{assumption5} 
For each uninformed follower in $ \bar{G}_{i}, i \in \Xi_{c} $, there exists at least one well-informed follower that has a directed path to it, while $ \bar{G}_{i}, i \in \Xi_{d} $ are allowed to be disconnected. There exist scalars $ \pi>0$, $ t_{\varepsilon} \geq t_{0} $ so that $T_{t_{\varepsilon}}^{b}(t) \leq \pi T_{t_{\varepsilon}}^{c}(t)$ for all $ t \geq  t_{\varepsilon} $.     	 
\end{condition}

\vspace*{2pt}
\begin{remark}
To achieve a time-varying formation-containment tracking, Assumption \ref{assumption5} is needed for all the followers to guarantee an agreement on a desired formation reference \cite{Dong17TAC}. 
%In Assumption \ref{assumption5}, we can verify that if the followers are equipped with powerful sensors and communication devices, they are treated as the well-informed followers. On the contrary, the followers equipped with poorer sensors and communication devices will be regarded as uninformed ones who obtain the relative neigboring information.  
\end{remark}

\begin{remark}
As compared to the existing works, solving Problem 2 is much more challenging at least from threefold. Firstly, the linear multi-agent system with multiple leaders has different dynamics and system dimensions. Secondly, unlike many existing works requiring the prefect communication network, the digraphs $ \bar{G}_{i}, i \in \Xi_{d} $ caused by unreliable communication are allowed to be disconnected in Assumption \ref{assumption5}. Thirdly, the presence of unknown and unbounded sensor attacks leads to the fact that only corrupted output measurements are available for distributed designs.    
\end{remark} 
 
\vspace{2pt} 
Similar to Section III, we propose a novel distributed estimator-based control architecture to solve Problem 2. 

\vspace{3pt} 
\underline{\textit{Reliable Distributed Leader Estimator Design}}: propose a novel reliable distributed leader estimator for each follower using the output information only, which is described by
\vspace{-2pt}
\begin{equation}
\dot{\bar{\zeta}}_{i}(t)= A_{0} \bar{\zeta}_{i}(t) - K_{0}C_{0} \bar{\xi}_{i}(t), \ i \in \mathcal{V}, \label{MultiLeaderDistributedEstimator}
\end{equation} 
where $ \bar{\xi}_{i}(t) $ denotes an estimated containment tracking error as 
\vspace{-3pt}
\begin{equation}
\hspace{-0.2em}
\bar{\xi}_{i}(t)=\sum_{i=1}^{N} a^{\sigma(t)}_{ij}(\bar{\zeta}_{i}(t) -\bar{\zeta}_{j}(t))+ \sum_{k=N+1}^{N+M}a^{\sigma(t)}_{ik}(\bar{\zeta}_{i}(t) - x_{k}(t)). 
\label{ContainmentTrackingError}
\end{equation} 
%where $\sigma (t):[0,\infty) \rightarrow \Xi=\Xi_{c} \cup \Xi_{b}$ is the switching signal to index the various connected and disconnected topologies caused by the unreliable communication network. 
 
Denote  $\tilde{\xi}(t)=\bar{\zeta}(t)-(\sum_{r=N+1}^{N+M} \mathbb{H}^{\sigma(t)}_{r})^{-1} \sum_{k=N+1}^{N+M} \mathbb{H}^{\sigma(t)}_{k} \textbf{1}\otimes x_{k}(t) $, where $ \mathbb{H}^{\sigma(t)}_{k}=\gamma_{k} \mathcal{L}^{\sigma(t)}+\Lambda^{\sigma(t)}_{k} $ represents the information exchange matrix with $ \Lambda^{\sigma(t)}_{k}=\text{diag}\{a^{\sigma(t)}_{ik}\} $ and $ \gamma_{k}= \frac{1}{M}$, $\forall k\in \mathbb{L}$. Then, combing (\ref{MultiLeaderDistributedEstimator}) and (\ref{ContainmentTrackingError}) yields the following error system 
%\vspace{-2pt}
\begin{align}
\hspace{-0.6em}
\dot{\tilde{\xi}}(t)&=  \dot{\bar{\zeta}}(t)-(\sum_{r=N+1}^{N+M} \mathbb{H}^{\sigma(t)}_{r})^{-1} \sum_{k=N+1}^{N+M} (\mathbb{H}^{\sigma(t)}_{k} \otimes A_{0})  (\textbf{1} \otimes  x_{k}(t)) \notag \\ 
&=[I_{N} \otimes A_{0} -\sum_{k=N+1}^{N+M} ( \mathbb{H}^{\sigma(t)}_{k} \otimes K_{0}C_{0})  ] \bar{\zeta}  + \sum_{k=N+1}^{N+M} (\mathbb{H}^{\sigma(t)}_{k} \otimes K_{0} \notag \\ 
&  \times C_{0}) (\textbf{1} \otimes  x_{k}(t)) - (\sum_{r=N+1}^{N+M} \mathbb{H}^{\sigma(t)}_{r})^{-1} \sum_{k=N+1}^{N+M} \mathbb{H}^{\sigma(t)}_{k}  (\textbf{1} \otimes  \dot{x}_{k}(t)) \notag \\
&= [I_{N} \otimes A_{0} - \sum_{k=N+1}^{N+M} ( \mathbb{H}^{\sigma(t)}_{k} \otimes K_{0}C_{0}) ] \times [\bar{\zeta}(t) \notag \\ 
& \ \ \ -(\sum_{r=N+1}^{N+M} \mathbb{H}^{\sigma(t)}_{r})^{-1} \sum_{k=N+1}^{N+M} \mathbb{H}^{\sigma(t)}_{k} \textbf{1}\otimes x_{k}(t)]  \notag \\
&=[I_{N} \otimes A_{0} - ( \sum_{k=N+1}^{N+M}  \mathbb{H}^{\sigma(t)}_{k} \otimes K_{0}C_{0}) ] \tilde{\xi}(t).   \label{DistributedContainmentLeaderErrorSystem} 
\end{align}    
%where $ \mathbb{H}^{\sigma(t)}_{k}=\gamma_{k} \mathcal{L}^{\sigma(t)}+\Lambda^{\sigma(t)}_{k} $ denotes the information exchange matrix with $ \Lambda^{\sigma(t)}_{k}=\text{diag}\{a^{\sigma(t)}_{ik}\} $ and $ \gamma_{k}= \frac{1}{M}$, $\forall k\in \mathbb{L}$.  

%\vspace{2pt}  
Next, the following lemmas are provided to establish symmetric and positive definite matrices for directed topologies.

\vspace{2pt} 
\begin{lemma} $ \mathbb{H}^{\sigma(t)}_{k}$ and $ \sum_{k=N+1}^{N+M} \mathbb{H}^{\sigma(t)}_{k} $ are positive define for $\forall $ $ \sigma(t) \in \Xi_{c} $ by Assumption \ref{assumption5}. 
\end{lemma}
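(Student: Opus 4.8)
The plan is to reduce both assertions to the single-root result of Lemma~\ref{definedlemma}, so that the only genuinely new content is translating the containment reachability of Assumption~\ref{assumption5} into the spanning-tree hypothesis that lemma requires. Since $\gamma_k=1/M$ is a fixed positive scalar, and scaling a matrix by a positive constant preserves the sign of any symmetrized form $Q(\cdot)+(\cdot)^{T}Q$ (and preserves nonsingularity), I may freely rescale each $\mathbb{H}^{\sigma(t)}_{k}$ and argue entirely on the underlying graph structure, whose zero/sign pattern is unaffected by the factor $1/M$.

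First I would treat the aggregate matrix, which is the one that actually enters the error system~(\ref{DistributedContainmentLeaderErrorSystem}). Using $\sum_{k=N+1}^{N+M}\gamma_{k}=1$, one computes
\[
\sum_{k=N+1}^{N+M} \mathbb{H}^{\sigma(t)}_{k}=\Big(\sum_{k=N+1}^{N+M}\gamma_{k}\Big)\mathcal{L}^{\sigma(t)}+\sum_{k=N+1}^{N+M}\Lambda^{\sigma(t)}_{k}=\mathcal{L}^{\sigma(t)}+\mathcal{B}^{\sigma(t)},
\]
where $\mathcal{B}^{\sigma(t)}=\mathrm{diag}\{\sum_{k}a^{\sigma(t)}_{ik}\}$ has $i$th entry equal to the total incoming weight from all leaders to follower $i$. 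This is exactly the matrix $H$ of Lemma~\ref{definedlemma} for the leader--follower graph obtained by collapsing all $M$ leaders into one virtual root linked to follower $i$ with weight $\sum_k a^{\sigma(t)}_{ik}$. For $\sigma(t)\in\Xi_{c}$ I would verify this collapsed graph has a directed spanning tree rooted at the virtual leader: every well-informed (and indeed every partially informed) follower has $\sum_k a^{\sigma(t)}_{ik}>0$ and is directly reachable from the root, while each strictly uninformed follower reaches the root along the directed path from a well-informed follower guaranteed by Assumption~\ref{assumption5}. Lemma~\ref{definedlemma} then delivers nonsingularity together with a positive-diagonal symmetrizer, i.e. positive definiteness of $\sum_{k}\mathbb{H}^{\sigma(t)}_{k}$.

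Next I would handle an individual $\mathbb{H}^{\sigma(t)}_{k}=\tfrac{1}{M}\mathcal{L}^{\sigma(t)}+\Lambda^{\sigma(t)}_{k}=\tfrac{1}{M}\big(\mathcal{L}^{\sigma(t)}+M\Lambda^{\sigma(t)}_{k}\big)$. The bracketed term has precisely the Laplacian-plus-leader structure of Lemma~\ref{definedlemma} for the single leader $k$, with edge weights rescaled by $M$. The graph fact to establish is that for $\sigma(t)\in\Xi_{c}$ every follower is reachable from leader $k$ alone: a well-informed follower receives an edge from all leaders, hence from $k$; any non-well-informed follower is reachable from some well-informed follower by Assumption~\ref{assumption5}, and concatenating gives a directed path from $k$. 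Thus the graph of leader $k$ and the followers contains a spanning tree rooted at $k$, Lemma~\ref{definedlemma} applies to $\mathcal{L}^{\sigma(t)}+M\Lambda^{\sigma(t)}_{k}$, and dividing by $M>0$ transfers the symmetrizer and nonsingularity to $\mathbb{H}^{\sigma(t)}_{k}$.

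The hard part will not be the algebra but the graph-theoretic bookkeeping for the \emph{individual} matrices. Assumption~\ref{assumption5} is phrased only for strictly \emph{uninformed} followers (those with no leader edges), so the argument must be made robust to \emph{partially} informed followers that happen to lack an edge from the particular leader $k$: such a follower need not be directly reachable from $k$, and Assumption~\ref{assumption5} does not literally cover it. I would resolve this by reading ``uninformed'' in Assumption~\ref{assumption5} as ``not well-informed'' (the convention in the cited containment framework \cite{Dong17TAC}), after which every non-well-informed follower inherits a directed path from a well-informed one and hence from each leader $k$. Once this spanning-tree property is secured for each $k$, the definiteness is immediate from Lemma~\ref{definedlemma} and the positivity of the scaling factors; for the aggregate the literal statement of Assumption~\ref{assumption5} already suffices, which is reassuring since it is the aggregate that the subsequent Lyapunov analysis relies on.
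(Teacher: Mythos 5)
The paper states this lemma with no proof at all --- it is asserted directly ``by Assumption \ref{assumption5}'' --- so your proposal supplies an argument the paper omits rather than paralleling one. Your reduction is sound: summing over $k$ and using $\gamma_{k}=1/M$, $\sum_{k}\gamma_{k}=1$ collapses the aggregate matrix to $\mathcal{L}^{\sigma(t)}+\mathcal{B}^{\sigma(t)}$ with $\mathcal{B}^{\sigma(t)}=\mathrm{diag}\{\sum_{k}a^{\sigma(t)}_{ik}\}$, which is exactly the $H$-matrix of Lemma \ref{definedlemma} for the graph in which the $M$ leaders are merged into a single virtual root, and Assumption \ref{assumption5} (read literally) yields the required spanning tree there; likewise writing $\mathbb{H}^{\sigma(t)}_{k}=\tfrac{1}{M}(\mathcal{L}^{\sigma(t)}+M\Lambda^{\sigma(t)}_{k})$ and noting that positive scaling preserves nonsingularity and the sign of $Q(\cdot)+(\cdot)^{T}Q$ reduces each individual matrix to Lemma \ref{definedlemma} with leader $k$ as the root. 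Since these matrices are not symmetric, ``positive definite'' must be read in the paper's own loose sense from Lemma \ref{definedlemma} (nonsingular with a positive-diagonal symmetrizer, equivalently eigenvalues in the open right half-plane), and that is precisely what your argument delivers; it also effectively proves Lemma \ref{QswitchingLemmaMultiLeaders} for the aggregate as a byproduct.

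The difficulty you flag is genuine and deserves to be stated plainly: under the paper's literal definition (an uninformed follower is one with \emph{no} incoming edges from \emph{any} leader), Assumption \ref{assumption5} imposes nothing on partially informed followers, and the claim for an individual $\mathbb{H}^{\sigma(t)}_{k}$ is then actually false. Concretely, take $N=M=2$ with follower $1$ well-informed, follower $2$ having an edge from leader $3$ only and no incoming follower edges: Assumption \ref{assumption5} holds vacuously, yet $\mathbb{H}^{\sigma(t)}_{4}=\mathrm{diag}\{a_{14},0\}$ is singular. So your reinterpretation of ``uninformed'' as ``not well-informed'' (the convention of the cited containment literature) is not a cosmetic convenience but is necessary for the first claim of the lemma to hold, whereas the aggregate claim, as you correctly observe, survives the literal reading because every follower with any leader edge is a direct child of the virtual root. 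With that reading made explicit, your proof is correct and complete.
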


\begin{lemma} \label{QswitchingLemmaMultiLeaders}
There exist some positive definite diagonal matrices $\bar{\Theta}_{\sigma(t)} = \text{diag} \{\bar{\theta}^{1}_{\sigma(t)}, \cdots, \bar{\theta}^{N}_{\sigma(t)}\}$ so that $\bar{Q}_{\sigma (t)}=\bar{\mathbb{H}}_{\sigma(t)}^{T}\bar{\Theta}_{\sigma (t)}+\bar{\Theta}_{\sigma (t)} \bar{\mathbb{H}}_{\sigma (t)}>0$ for $\forall \sigma(t) \in \Xi_{c} $, where $ \bar{\mathbb{H}}_{\sigma(t)}=\sum_{k=N+1}^{N+M} \mathbb{H}^{\sigma(t)}_{k} $.  
\end{lemma}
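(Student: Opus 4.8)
The plan is to reduce this to the fixed-graph result of Lemma \ref{definedlemma}, exactly as Lemma \ref{QswitchingLemma} did for the single-leader case, by first exposing the M-matrix structure of the aggregated containment matrix $\bar{\mathbb{H}}_{\sigma(t)}$. My first step is to simplify $\bar{\mathbb{H}}_{\sigma(t)} = \sum_{k=N+1}^{N+M} \mathbb{H}^{\sigma(t)}_{k}$. Since $\mathbb{H}^{\sigma(t)}_{k} = \gamma_{k}\mathcal{L}^{\sigma(t)} + \Lambda^{\sigma(t)}_{k}$ with $\gamma_{k} = 1/M$ and $\sum_{k=N+1}^{N+M}\gamma_{k} = 1$, summing yields $\bar{\mathbb{H}}_{\sigma(t)} = \mathcal{L}^{\sigma(t)} + \sum_{k=N+1}^{N+M}\Lambda^{\sigma(t)}_{k} = \mathcal{L}^{\sigma(t)} + \text{diag}\{\sum_{k=N+1}^{N+M} a^{\sigma(t)}_{ik}\}$. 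This is precisely the Laplacian-plus-pinning ``information exchange matrix'' $H = \mathcal{L} + \mathcal{B}$ appearing in Lemma \ref{definedlemma}, with the pinning gain on follower $i$ now being the total weight $\sum_{k} a^{\sigma(t)}_{ik}$ of its incoming edges from all leaders.

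Next I would verify that $\bar{\mathbb{H}}_{i}$ is a nonsingular M-matrix for each $i \in \Xi_{c}$. Its off-diagonal entries are $-a^{\sigma(t)}_{ij} \leq 0$ coming from the Laplacian, and its row sums equal $\sum_{k} a^{\sigma(t)}_{ik} \geq 0$, so $\bar{\mathbb{H}}_{i}$ is a diagonally dominant Z-matrix. Under Assumption \ref{assumption5} every uninformed follower is reachable from a well-informed one, so collapsing the $M$ leaders into a single virtual root produces a graph possessing a spanning tree rooted at that vertex; this is exactly the hypothesis of Lemma \ref{definedlemma}, whence $\bar{\mathbb{H}}_{i}$ is nonsingular with entrywise-nonnegative inverse, its positive definiteness already being asserted by the preceding lemma. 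Consequently the vector $\bar{\theta}_{i} = \bar{\mathbb{H}}_{i}^{-T}\mathbf{1}$ has strictly positive entries, and I set $\bar{\Theta}_{i} = \text{diag}\{\bar{\theta}^{1}_{i},\ldots,\bar{\theta}^{N}_{i}\} > 0$.

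With this diagonal matrix in hand, Lemma \ref{definedlemma} applies verbatim to each fixed graph $\bar{\mathcal{G}}_{i}$: the choice $\bar{\theta}_{i} = \bar{\mathbb{H}}_{i}^{-T}\mathbf{1}$ forces $\bar{Q}_{i} = \bar{\mathbb{H}}_{i}^{T}\bar{\Theta}_{i} + \bar{\Theta}_{i}\bar{\mathbb{H}}_{i}$ to be symmetric and positive definite. Since $\Xi_{c} = \{1,\ldots,\bar{\delta}\}$ is a finite index set, repeating this construction for every $i \in \Xi_{c}$ produces the required finite family $\bar{\Theta}_{\sigma(t)}$, and the claim follows for all $\sigma(t) \in \Xi_{c}$.

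I expect the only genuine obstacle to be the identification step: confirming that the \emph{aggregated} multi-leader containment matrix $\bar{\mathbb{H}}_{i}$ inherits the single-root, spanning-tree M-matrix structure required by Lemma \ref{definedlemma}, rather than merely being positive definite. Once the cancellation $\sum_{k}\gamma_{k} = 1$ collapses the Laplacian terms and Assumption \ref{assumption5} supplies reachability of every follower from the aggregated root, the remainder is a direct transcription of the fixed-graph argument over the finite set $\Xi_{c}$, mirroring the proof of Lemma \ref{QswitchingLemma}.
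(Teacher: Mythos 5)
Your proposal is correct and follows the route the paper itself intends: the paper states this lemma without any proof, implicitly by analogy with Lemma \ref{QswitchingLemma}, whose own proof simply defers to the fixed-graph result of Lemma \ref{definedlemma} and the cited literature. Your argument supplies exactly the details that deferral leaves implicit --- the cancellation $\sum_{k}\gamma_{k}=1$ collapsing $\bar{\mathbb{H}}_{\sigma(t)}$ to the Laplacian-plus-pinning form $\mathcal{L}^{\sigma(t)}+\text{diag}\{\sum_{k}a^{\sigma(t)}_{ik}\}$, the virtual single-root spanning-tree structure guaranteed by Assumption \ref{assumption5}, and the finiteness of $\Xi_{c}$ --- so it is a faithful, fleshed-out version of the same approach rather than a different one.
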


%\vspace{2pt}  
For notational convenience, denote 
\vspace{-4pt}
\begin{equation} 
\bar{\mu}= \left\{ 
\begin{array}{c}
%\hspace{-0.1em}
\max_{i\in \Xi_{c}} \{ \lambda_{max}(\bar{\Theta}_{i})/  \lambda_{min}(\bar{\Theta}_{i}) \}, \ \text{if} \ \bar{\delta} >1,      \\ 
%\hspace{-0.6em}
1, \ \ \ \ \ \  \ \ \  \ \ \  \ \ \ \ \ \  \ \ \  \ \ \  \ \ \  \ \ \ \ \ \ \ \ \ \ \   \text{if} \ \bar{\delta} = 1, 
\end{array}
\right. 
\label{muML}
\end{equation}
where $ \bar{\Theta}_{i}>0 $, $ i \in \Xi_{c} $ are defined in Lemma \ref{QswitchingLemmaMultiLeaders}, and further, let
\vspace{-4pt}
%\begin{subequations}\label{Eigenvalues}
\begin{equation}
\bar{\lambda}_{m}=\min_{i \in \Xi_{c}}\{\lambda_{min}(\bar{\Theta} ^{-1}_{i}\bar{\mathbb{H}}_{i}^{T}\bar{\Theta}_{i}+\bar{\mathbb{H}}_{i})\}= \min_{i \in \Xi_{c}}\{\lambda_{min}(\bar{\Theta} ^{-1}_{i}\bar{Q}_{i})\}, \notag  
\end{equation}
\vspace{-15pt}
%\ i \in \Xi_{c}, 
%\label{E1} \\
\begin{equation}
\bar{\sigma}_{m}=\max_{i \in \Xi_{d}}\{\sigma_{max}(\bar{\Phi}^{-1} \bar{\mathbb{H}}_{i}^{T}\bar{\Phi}+ \bar{\mathbb{H}}_{i})\}, \ \bar{\Phi}=(\sum_{i=1}^{\bar{\delta}}\bar{\Theta}_{i}) /\bar{\delta}.
%\ i \in \Xi_{d},  
\label{EigenvaluesML}
\end{equation}
%\end{subequations} 
%where $ \bar{\Phi}>0 $ denotes the average of $\bar{\Theta}_{i}$, $ \forall i \in \Xi_{c} $, and moreover, it is not difficult to derive $ \bar{\Phi} < \bar{\mu} \bar{\Theta}_{i}$, $ \forall  i \in \Xi_{c} $. 

\vspace{-5pt}
\begin{theorem} \label{theorem3} 
Suppose that Assumption \ref{assumption5} holds. If the distributed leader estimator is designed as (\ref{MultiLeaderDistributedEstimator}) with $ K_{0}=\bar{\kappa}_{0} \bar{P}^{-1}_{0}C^{T}_{0}\bar{R}^{-1}_{0} $, $\bar{\kappa}_{0} \in (1 / \bar{\lambda}_{m}, \bar{\epsilon}/\bar{\sigma}_{m}) $, then the estimated states globally exponentially converge to the convex combination of the leaders' states, provided that the scalars $ \bar{\tau}_{a}$ and $\bar{\pi} $ satisfy  
\vspace{-3pt} 
%\begin{align}
%\frac{1} {\bar{\lambda}_{m}}< \bar{\kappa}_{0}< \frac{\bar{\epsilon}}{\bar{\sigma}_{m}}, \ \bar{\tau}_{a}  > \frac{\ln \bar{\bar{\mu}} } { \bar{\eta}^{*}-\bar{\eta}},  \ \bar{\pi} <  \frac{\bar{\alpha}- \bar{\eta}^{*}}{\bar{\beta}+\bar{\eta}^{*}},   
%\label{ConditionML}
%\end{align}
\begin{align}
\bar{\tau}_{a}  > (\ln  \bar{\mu}) / ( \bar{\eta}^{*}-\bar{\eta}),  \ \bar{\pi} <  (\bar{\alpha}- \bar{\eta}^{*})/ (\bar{\beta}+\bar{\eta}^{*}),   
\label{ConditionML}
\end{align}
where  $ \bar{\mu} \geq 1 $, $\bar{\alpha} = \lambda _{\min}(\bar{Q}_{0})/\lambda _{\max}(\bar{P}_{0})$, $ \bar{\eta}^{*} \in (0,\bar{\alpha}) $, $ \bar{\eta} \in (0,\bar{\eta}^{*} ) $, and  $ \bar{\beta}, \bar{P}_{0} $ are solutions of minimizing $ \bar{\beta} >0 $ subject to $ \bar{P}_{0}A_{0}+A^{T}_{0}\bar{P}_{0}- C^{T}_{0}\bar{R}^{-1}_{0}C_{0}+\bar{Q}_{0}<0 $ and $ \bar{P}_{0}A_{0}+ A^{T}_{0}\bar{P}_{0} +\bar{\epsilon} C^{T}_{0}\bar{R}^{-1}_{0} C_{0} -\bar{\beta} \bar{P}_{0} <0 $, 
%%an optimization problem: 
%\vspace{-3pt}
%\begin{equation} 
%\text{minimize} \ \bar{\beta} >0, \label{optimizationML}
%\end{equation}
%%\vspace{-2pt}
%\begin{equation}
%s.t. \left\{ 
%\begin{array}{c}
%\bar{P}_{0}A_{0}+A^{T}_{0}\bar{P}_{0}- C^{T}_{0}\bar{R}^{-1}_{0}C_{0}+\bar{Q}_{0}<0,  \\
%\bar{P}_{0}A_{0}+ A^{T}_{0}\bar{P}_{0} +\bar{\epsilon} C^{T}_{0}\bar{R}^{-1}_{0} C_{0} -\bar{\beta} \bar{P}_{0} <0,  \label{AREML} 
%\end{array}
%\right.
%\end{equation}
where $\bar{R}_{0} $ and $\bar{Q}_{0}>I_{r}$ are symmetric positive definite.   
\end{theorem}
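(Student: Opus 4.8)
The plan is to recognize that Theorem \ref{theorem3} is the multi-leader counterpart of Theorem \ref{theorem1}, and that the containment error system (\ref{DistributedContainmentLeaderErrorSystem}) has exactly the same algebraic form as the single-leader error system (\ref{DistributedLeaderErrorSystem}), with the single information matrix $H_{\sigma(t)}$ replaced by the aggregate matrix $\bar{\mathbb{H}}_{\sigma(t)}=\sum_{k=N+1}^{N+M}\mathbb{H}^{\sigma(t)}_{k}$. Lemma \ref{QswitchingLemmaMultiLeaders} then plays the role that Lemma \ref{QswitchingLemma} played in Theorem \ref{theorem1}, supplying positive-definite diagonal matrices $\bar{\Theta}_{\sigma(t)}$ for every $\sigma(t)\in\Xi_{c}$ with $\bar{Q}_{\sigma(t)}=\bar{\mathbb{H}}_{\sigma(t)}^{T}\bar{\Theta}_{\sigma(t)}+\bar{\Theta}_{\sigma(t)}\bar{\mathbb{H}}_{\sigma(t)}>0$. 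Consequently I would replay the entire piecewise-Lyapunov / average-dwell-time argument of Theorem \ref{theorem1}, replacing every quantity by its barred analogue defined in (\ref{muML})--(\ref{EigenvaluesML}).

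Concretely, I would take the piecewise candidate patterned on (\ref{PiecewiseLyapunovFunction}), namely $V(t)=\tilde{\xi}^{T}(t)(\bar{\Theta}_{\sigma(t)}\otimes\bar{P}_{0})\tilde{\xi}(t)$ when $\sigma(t)\in\Xi_{c}$ and $V(t)=\tilde{\xi}^{T}(t)(\bar{\Phi}\otimes\bar{P}_{0})\tilde{\xi}(t)$ when $\sigma(t)\in\Xi_{b}$. Differentiating along (\ref{DistributedContainmentLeaderErrorSystem}) with $K_{0}=\bar{\kappa}_{0}\bar{P}^{-1}_{0}C^{T}_{0}\bar{R}^{-1}_{0}$ and using the first matrix inequality in the theorem together with $\bar{\kappa}_{0}>1/\bar{\lambda}_{m}$ yields $\dot{V}(t)\le-\bar{\alpha}V(t)$ on the connected intervals, exactly as in (\ref{L2}); on the disconnected intervals the second matrix inequality with $\bar{\kappa}_{0}\bar{\sigma}_{m}<\bar{\epsilon}$ gives $\dot{V}(t)\le\bar{\beta}V(t)$, mirroring (\ref{L4}). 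Chaining these bounds across switches as in (\ref{L6})--(\ref{L11}), invoking Assumption \ref{assumption5} to obtain $T^{b}_{t_{0}}(t)\le\bar{\pi}T^{c}_{t_{0}}(t)$ and the average-dwell-time count, and imposing the two conditions in (\ref{ConditionML}), I would arrive at $V(t)\le c\,e^{-\bar{\eta}(t-t_{0})}V(t_{0})$ and hence $\|\tilde{\xi}(t)\|\to0$ exponentially as in (\ref{L13}). Exponential decay of $\tilde{\xi}$ is precisely convergence of each $\bar{\zeta}_{i}$ to the convex combination $(\sum_{r}\mathbb{H}^{\sigma}_{r})^{-1}\sum_{k}\mathbb{H}^{\sigma}_{k}\mathbf{1}\otimes x_{k}$ of the leaders' states, which is the claimed conclusion.

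The one place where this proof genuinely departs from Theorem \ref{theorem1}, and which I expect to be the main obstacle, is the jump behaviour of $\tilde{\xi}$ at switching instants. In the single-leader case the target $x_{0}$ was topology-independent, so $\tilde{\xi}$ was continuous and the only jump of $V$ came from the change of weight matrix $\Theta_{\sigma(t)}$, bounded by $\mu$ in (\ref{L8}). Here the reference $(\sum_{r}\mathbb{H}^{\sigma(t)}_{r})^{-1}\sum_{k}\mathbb{H}^{\sigma(t)}_{k}\mathbf{1}\otimes x_{k}$ is itself topology-dependent, so $\tilde{\xi}$ can jump even though $\bar{\zeta}$ is continuous. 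Two facts must therefore be checked with care: (i) on each dwell interval the reference still obeys $\tfrac{d}{dt}(\cdot)=(I_{N}\otimes A_{0})(\cdot)$, which holds because $A_{0}$ commutes through the Kronecker product and all leaders share the common matrix $A_{0}$, so that the clean error dynamics (\ref{DistributedContainmentLeaderErrorSystem}) are valid within each interval; and (ii) the induced jump of $V$ across a switch remains bounded by a finite factor $\bar{\mu}$, using $\bar{\Phi}<\bar{\mu}\,\bar{\Theta}_{i}$ together with a uniform bound on the reference jump. Securing (ii)---absorbing the reference discontinuity into $\bar{\mu}$ so that the chain (\ref{L9}) still telescopes---is where I would concentrate the effort; once it is in place, the remainder is a verbatim transcription of the proof of Theorem \ref{theorem1}.
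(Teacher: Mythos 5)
Your overall plan---transcribing the piecewise-Lyapunov, average-dwell-time proof of Theorem \ref{theorem1} with every quantity replaced by its barred analogue from (\ref{muML})--(\ref{EigenvaluesML})---is exactly the route the paper intends; its own proof of Theorem \ref{theorem3} is omitted as ``similar to Theorem 1.'' You also correctly isolate the one step that is \emph{not} a verbatim transcription: the target $(\sum_{r}\mathbb{H}^{\sigma(t)}_{r})^{-1}\sum_{k}\mathbb{H}^{\sigma(t)}_{k}\mathbf{1}\otimes x_{k}(t)$ appears to depend on the active topology, so $\tilde{\xi}$ would jump at switching instants. The gap lies in your proposed repair: an error-independent, \emph{additive} jump cannot be absorbed into a \emph{multiplicative} constant $\bar{\mu}$. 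Writing the reference as $W_{\sigma}X(t)$ with $X=\text{col}(x_{N+1},\dots,x_{N+M})$, at a switch one has $\tilde{\xi}(t_{k})=\tilde{\xi}(t_{k}^{-})+(W_{\sigma(t_{k}^{-})}-W_{\sigma(t_{k})})X(t_{k})$; if $\tilde{\xi}(t_{k}^{-})=\mathbf{0}$ while the two combinations differ, then $V(t_{k}^{-})=0<V(t_{k})$, so no finite $\bar{\mu}$ satisfies $V(t_{k})\leq\bar{\mu}V(t_{k}^{-})$ and the telescoping chain corresponding to (\ref{L9}) breaks. There is a second difficulty you do not note: for $\sigma(t)\in\Xi_{b}$ the graph may be disconnected, $\sum_{r}\mathbb{H}^{\sigma(t)}_{r}$ can be singular, and the error you propose to differentiate on those intervals is then not even well defined.

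Both problems disappear---and this is what the paper implicitly relies on---once one observes that the reference is in fact topology-independent. Since $\mathcal{L}^{\sigma}\mathbf{1}=\mathbf{0}$ and $\sum_{k}\gamma_{k}=1$, one computes $\sum_{k}\mathbb{H}^{\sigma}_{k}\mathbf{1}\otimes x_{k}=\sum_{k}(\Lambda^{\sigma}_{k}\mathbf{1})\otimes x_{k}$, while $(\sum_{l}\mathbb{H}^{\sigma}_{l})\mathbf{1}\otimes(\tfrac{1}{M}\sum_{k}x_{k})=(\sum_{l}\Lambda^{\sigma}_{l}\mathbf{1})\otimes(\tfrac{1}{M}\sum_{k}x_{k})$. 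If each follower's leader edge weights are equal across leaders ($a^{\sigma}_{ik}$ independent of $k$, which is the natural reading of the well-informed/uninformed dichotomy, e.g.\ with $0/1$ weights), the $i$th blocks $\sum_{k}a^{\sigma}_{ik}x_{k}$ and $(\sum_{l}a^{\sigma}_{il})\tfrac{1}{M}\sum_{k}x_{k}$ agree, so the reference equals $\mathbf{1}\otimes\tfrac{1}{M}\sum_{k}x_{k}(t)$: a single fixed convex combination, the same for every $\sigma$, well defined also on $\Xi_{b}$ intervals, and evolving under $I_{N}\otimes A_{0}$. Defining $\tilde{\xi}$ against this fixed reference makes it continuous at switches, keeps (\ref{DistributedContainmentLeaderErrorSystem}) valid on each dwell interval, and the rest of your argument---the two dissipation inequalities, the jump bound coming only from the change of $\bar{\Theta}_{\sigma}$, and the dwell-time chaining---then transcribes verbatim from Theorem \ref{theorem1}. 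Without some such uniformity of the leader weights the statement itself is problematic, since the estimator would be asked to track a persistently jumping target.
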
  
 
%\vspace{2pt}  
\begin{proof}
the proof is similar to Theorem 1, and is omitted. 	
\end{proof}	  
 
\vspace{2pt} 
\begin{remark}
Note that the main difference between distributed leader estimators in (\ref{DistributedLeaderEstimator}) and (\ref{MultiLeaderDistributedEstimator}) are that the latter has the summation term $ \sum_{k=N+1}^{N+M}a^{\sigma(t)}_{ik}(\bar{\zeta}_{i}(t) - x_{k}(t)) $ in (\ref{ContainmentTrackingError}), which makes $ \bar{\zeta}(t) $ exponentially approach the convex combination of these leaders' states. In particular, $ (\sum_{r=N+1}^{N+M} \mathbb{H}^{\sigma(t)}_{r})^{-1} \sum_{k=N+1}^{N+M} \mathbb{H}^{\sigma(t)}_{k} \textbf{1}\otimes x_{k}(t) $ is the convex combination of $ \textbf{1}\otimes x_{k}(t) $ for $ \forall k \in \mathbb{L} $. 
%Thus, Theorem 3 guarantees the global exponential convergence of each follower's estimate on the state convex combination of all leaders.   
\end{remark}

%\vspace{1pt}  
Next, we are ready to present the resilient time-varying output containment-formation tracking result as follows. 

\vspace{1pt}
\begin{theorem} \label{theorem4}
Consider the heterogeneous leader-follower multi -agent system in (\ref{FCDynamcis}) subject to unreliable digraphs and unknown and unbounded FDI sensor attacks. Suppose that Assumptions \ref{AttacksCondition}-\ref{assumption5} hold. Under a resilient distributed output feedback controller 
\vspace{-8pt} 
\begin{equation}
u_{i}(t)=K_{1i}\hat{x}_{i}(t)+K_{2i}  \bar{\zeta}_{i}(t)+K_{3i}h_{i}(t),  \label{ResilientDistributedSensorAlgorithmML} 
\end{equation} 
where $ \hat{x}_{i}(t) $ and $ \bar{\zeta}_{i}(t) $ are developed in (\ref{ResilientObserver}) and  (\ref{MultiLeaderDistributedEstimator}), respectively, then, the time-varying output containment-formation tracking can be achieved, provided that the gain matrices $ L_{i} $, $ M_{i} $ are selected so that $ A_{\varrho i} $ in (\ref{ClosedLoopErrorSystemMatrices}) is Hurwitz, $ K_{1i} $ is chosen so that $ A_{i}+B_{i}K_{1i} $ is Hurwitz, and  $ K_{2i}, K_{3i} $ are given in (\ref{FormationTrackingCondition}).
\end{theorem}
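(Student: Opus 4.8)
The plan is to reduce Theorem~\ref{theorem4} to the single-leader argument of Theorem~\ref{theorem2} by treating the convex combination of the leaders' states as one virtual leader. First I would invoke Theorem~\ref{theorem3}: under (\ref{MultiLeaderDistributedEstimator}) the collective estimate $\bar{\zeta}(t)$ converges exponentially to $(\sum_{r=N+1}^{N+M}\mathbb{H}^{\sigma(t)}_{r})^{-1}\sum_{k=N+1}^{N+M}\mathbb{H}^{\sigma(t)}_{k}\mathbf{1}\otimes x_k(t)$, whose $i$th block I denote $x_i^{c}(t)$, a convex combination of the leader states. The crucial observation is that every leader in (\ref{FCDynamcis}) shares the same generator $A_0$, so $\dot{x}_i^{c}(t)=A_0 x_i^{c}(t)$ and the containment reference output equals $C_0 x_i^{c}(t)$; hence $x_i^{c}$ behaves exactly as a single leader obeying (\ref{LeaderDynamics}), and the regulator equations (\ref{SystemDynamicRegulatedEquation})--(\ref{FormationRegulatorEquation}), which depend only on $(A_0,C_0)$, remain valid with $x_0$ replaced by $x_i^{c}$ and the same pair $(X_i,U_i)$.

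With this reduction in hand, I would repeat the coordinate change of Theorem~\ref{theorem2} essentially verbatim. Define $\bar{x}_i(t)=x_i(t)-X_i x_i^{c}(t)-X_{hi}h_i(t)$ and differentiate along (\ref{FCDynamcis}) with the controller (\ref{ResilientDistributedSensorAlgorithmML}). Substituting $\hat{x}_i=x_i-\tilde{x}_i$, $\bar{\zeta}_i=x_i^{c}+\tilde{\xi}_i$, the regulator equations, and the gain choices $K_{2i}=U_i-K_{1i}X_i$, $K_{3i}=U_{hi}-K_{1i}X_{hi}$ from (\ref{FormationTrackingCondition}), the same cancellations as in (\ref{FTE1})--(\ref{FTE3}) collapse the dynamics to
\begin{equation}
\dot{\bar{x}}_i(t)=(A_i+B_iK_{1i})\bar{x}_i(t)-B_iK_{1i}\tilde{x}_i(t)+B_iK_{2i}\tilde{\xi}_i(t),\notag
\end{equation}
a Hurwitz system (since $A_i+B_iK_{1i}$ is Hurwitz by hypothesis) driven by two exponentially vanishing inputs: $\tilde{\xi}_i\to\mathbf{0}$ by Theorem~\ref{theorem3} and $\tilde{x}_i\to\mathbf{0}$ by Proposition~\ref{proposition1}. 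The applicability of Proposition~\ref{proposition1} is immediate, because the augmented observer/attack-estimation error system (\ref{ClosedLoopErrorSystem}) does not involve the leader estimator at all; its convergence rests only on $A_{\varrho i}$ being Hurwitz and on the compensation law (\ref{SensorCompensationSignal}), both unchanged in the multi-leader setting. A standard cascade/convolution bound then yields $\bar{x}_i(t)\to\mathbf{0}$ exponentially.

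To translate back to the output error, I would write $e_i(t)=y_i(t)-y_{hi}(t)-\sum_{k=N+1}^{N+M}\gamma_k y_k(t)$ and use $y_i=C_i x_i$, $y_{hi}=C_{hi}h_i$, together with $C_iX_i=C_0$ and $C_iX_{hi}=C_{hi}$, exactly as in (\ref{OutputFormationTrackingError}), to obtain $e_i(t)=C_i\bar{x}_i(t)$; hence $e_i(t)\to\mathbf{0}$ exponentially and (\ref{FCproblem}) holds.

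The step I expect to require the most care is the reduction itself: confirming that the per-follower limit $x_i^{c}$ furnished by Theorem~\ref{theorem3} is genuinely a convex combination of the $x_k$ (weights summing to one and nonnegative, from the M-matrix structure of $\bar{\mathbb{H}}_{\sigma(t)}$) and, more importantly, that it inherits the generator $A_0$ so one common $(X_i,U_i)$ simultaneously regulates follower $i$ onto the containment reference. The delicate point is reconciling the weights $\{(\bar{\mathbb{H}}^{-1}_{\sigma(t)}\mathbb{H}^{\sigma(t)}_{k}\mathbf{1})_i\}$ that the estimator actually produces with the common coefficients $\gamma_k$ appearing in the objective (\ref{FCproblem}); once this identification is established, everything downstream is a line-for-line transcription of Theorem~\ref{theorem2}, with Theorem~\ref{theorem3} replacing Theorem~\ref{theorem1} and Proposition~\ref{proposition1} carried over untouched.
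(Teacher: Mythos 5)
Your proposal takes essentially the same route as the paper's proof: the paper defines the containment-formation error $\bar{x}_i(t)=x_i(t)-X_i\sum_{k=N+1}^{N+M}\gamma_k x_k(t)-X_{hi}h_i(t)$, collapses the dynamics by the same cancellations to $\dot{\bar{x}}_i=(A_i+B_iK_{1i})\bar{x}_i-B_iK_{1i}\tilde{x}_i+B_iK_{2i}\tilde{\xi}_i$, invokes Theorem~\ref{theorem3} and Proposition~\ref{proposition1} for the exponentially vanishing inputs, and concludes via $e_i=C_i\bar{x}_i$ using $C_iX_i=C_0$ and $C_iX_{hi}=C_{hi}$. The one step you single out as delicate---reconciling the estimator's per-follower weights $(\bar{\mathbb{H}}^{-1}_{\sigma(t)}\mathbb{H}^{\sigma(t)}_{k}\mathbf{1})_i$ with the common coefficients $\gamma_k$ in (\ref{FCproblem})---is silently identified in the paper's own argument, so your treatment is, if anything, more explicit than the published proof on that point.
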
  
 
\vspace{1pt}  
\begin{proof}
denote a containment-formation tracking error as   
\vspace{-8pt} 
\begin{equation}
\bar{x}_{i}(t)=x_{i}(t)-X_{i}\sum_{k=N+1}^{N+M} \gamma_{k}x_{k}(t)-X_{hi}h_{i}(t). \label{FormationTrackingErrorML}
\end{equation}

\vspace{-6pt} 
Similar to (\ref{FTE1})-(\ref{OutputFormationTrackingError}), we can have $ \dot{\bar{x}}_{i}(t) =(A_{i} +B_{i}K_{1i} ) \bar{x}_{i}(t)- B_{i}K_{1i}\tilde{x}_{i}(t) + B_{i} K_{2i}  \tilde{\xi}_{i}(t)$.  Then, from Theorem \ref{theorem3} and Proposition \ref{proposition1}, $ K_{1i} $ can be chosen so that $ \bar{x}_{i}(t) $ converges to zero exponentially.
%%that  $ \tilde{x}_{i}(t) $ converges to zero  exponentially and from Proposition \ref{proposition1} that $ \tilde{x}_{i}(t) $ converges to zero exponentially. $ K_{1i} $ is chosen so that $ A_{i}+B_{i}K_{1i} $ is Hurwitz. Thus,  
%$ \bar{x}_{i}(t) $ converges to zero exponentially.
%, i.e.,  $ \lim_{t\rightarrow \infty} \bar{x}_{i}(t) = \textbf{0} $ exponentially. 
%\vspace{-3pt} 
%\begin{equation}
%\dot{\bar{x}}_{i}(t) =(A_{i} +B_{i}K_{1i} ) \bar{x}_{i}(t)- B_{i}K_{1i}\tilde{x}_{i}(t) + B_{i} K_{2i}  \tilde{\xi}_{i}(t). 
%\label{equation} 
%\end{equation}
%It follows from Theorem \ref{theorem1} that  $ \tilde{x}_{i}(t) $ exponentially converges to zero and from and Proposition \ref{proposition1} that $ \tilde{x}_{i}(t) $ also exponentially converges to zero. Moreover, $ K_{1i} $ is selected so that  $ A_{i}+B_{i}K_{1i} $ is Hurwitz. Thus,  
%%can obtain that $ \bar{x}_{i}(t) $ can exponentially converge to zero, i.e., 
%$ \lim_{t\rightarrow \infty} \bar{x}_{i}(t) = \textbf{0} $ exponentially.
The time-varying output containment-formation tracking error is      
\vspace{-14pt} 
\begin{align}
\hspace{-0.6em}
e_{i}(t)&= y_{i}(t)-y_{hi}(t)-\sum_{k=N+1}^{N+M}\gamma_{k}y_{k}(t) =C_{i} [ \bar{x}_{i}(t)+X_{hi}h_{i}(t) \notag \\
& +X_{i}\sum_{k=N+1}^{N+M}\gamma_{k}x_{k}(t) ] -C_{hi}h_{i}(t)-C_{0}\sum_{k=N+1}^{N+M}\gamma_{k}x_{k}(t)  \label{OutputFormationTrackingErrorML} \\
&=C_{i} \bar{x}_{i} + (C_{i}X_{i}-C_{0} ) \sum_{k=N+1}^{N+M}\gamma_{k} x_{k} + (C_{i}X_{hi}-C_{hi}) h_{i}, \notag     
\end{align}
where $ \sum_{k=N+1}^{N+M}\gamma_{k} x_{k} $ is the convex combination of $ x_{k}, k\in \mathbb{L} $.

%\vspace{3pt} 
Due to the fact that $ C_{i}X_{i}=C_{0}  $ and $ C_{i}X_{hi}=C_{hi} $, then we can obtain $ e_{i}(t)=C_{i} \bar{x}_{i}(t) $. Since $ \lim_{t\rightarrow \infty} \bar{x}_{i}(t) = \textbf{0} $ exponentially, $ \lim_{t\rightarrow \infty} e_{i}(t) = \textbf{0} $ exponentially. Hence, the global exponential time-varying output containment-formation tracking is achieved for multi-agent follower-leader systems under FDI sensor attacks and unreliable communication digraphs.  
\end{proof}

\vspace{-2pt} 
\section{Numerical Simulation} 
\vspace{-1pt}
In this section, the simulation results are presented to show the effectiveness of the proposed resilient distributed algorithms for heterogeneous linear multi-agent systems.

\vspace{-6pt}
\subsection{System description}
Consider a multi-agent system consisting of six agents with the following heterogeneous linear dynamics 
\vspace{-1pt}
\begin{equation}
\hspace{-3em}
\left\{
\begin{array}{c}
\hspace{-2.2em}
\dot{x}_{i}(t) = \left[ \begin{array}{cc}
1 & 1  \\
0 & a_{i}
\end{array} \right]x_{i}(t) + \left[ \begin{array}{cc}
0   \\ 
b_{i}  
\end{array} \right]u_{i}(t), 
\\
%\hspace{-0.5em}
y_{i}(t)=\left[ \begin{array}{cc}
d_{i} & 0   \\
0  & e_{i}  
\end{array} \right]x_{i}(t), \ i \in \mathcal{V}_{1}=\{1,2,3\}, 
\end{array}%
\right.  
\label{Sim1}
\end{equation}
%\vspace{-3pt}
%\hspace{-0.5em}
\begin{equation}
%\hspace{-0.1em}
\left\{
\begin{array}{c}
\hspace{-1.2em}
\dot{x}_{i}(t) = \left[ \begin{array}{ccc}
1 & 1 & 0  \\
0 & -1 & 1 \\
0 & a_{i} & c_{i}  
\end{array} \right]x_{i}(t) + \left[ \begin{array}{c}
0  \\
0  \\
b_{i}  
\end{array} \right]u_{i}(t), 
\\
%\hspace{-0.1em}
y_{i}(t)=\left[ \begin{array}{ccc}
d_{i} & 0 & 0  \\
0   & e_{i} & 0 
\end{array} \right]x_{i}(t), \ i \in  \mathcal{V}_{2}=\{4,5,6\},
\end{array}%
\right.  
\label{Sim2}
\end{equation}
where the nonidentical parameters $ \{ a_{i}, b_{i}, c_{i}, d_{i}, e_{i}\} $ are chosen as $ \{-1, 1, 0, 1 , 1 \} $, $ \{ -1.5, 2, 0, 1, 1 \} $, $ \{ -2, 3, 0, 1, 1\} $,  $ \{ 2.5, 4, 4, 1, 1 \\ \} $,  $ \{ 3, 5, 5, 1, 1 \} $,  and $ \{ 3.5, 6, 6, 1, 1 \} $, respectively. In addition, the leader dynamics are described  by  
\vspace{-2pt}
\begin{equation}  \label {Sim3}
\dot{x}_{0}(t)=\left[ \begin{array}{cc}
1 & -3  \\
2 & -1    
\end{array} \right]x_{0}(t),  \  y_{0}(t)=\left[ \begin{array}{cc}
1 & 0 \\  
0 & -3 
\end{array} \right]x_{0}(t).  
\end{equation}

Based on (\ref{Sim1})-(\ref{Sim3}), it is not hard to verify that the pair $ (A_{i},B_{i}) $ is stabilizable and $ (A_{0},C_{0}) $ is detectable. Moreover, Assumption \ref{ObservableCondition} is satisfied. The solution to the regulated equation in (\ref{SystemDynamicRegulatedEquation}) is
\vspace{-2pt}
\begin{equation}  
\hspace{0.5em}
\left\{
\begin{array}{c}
\hspace{-3.5em}
X_{i}= \left[ \begin{array}{cc}
1 & 0 \\  
0 & -3 
\end{array} \right], \ U_{i}=\left[ \begin{array}{c}
-6/b_{i}  \\
3(a_{i}+1)/b_{i} 
\end{array} \right]^{T}, \ i\in \mathcal{V}_{1},  \\
\hspace{-0.5em}
X_{i}= \left[ \begin{array}{ccc}
1 & 0 & -6 \\  
0 & -3 & 0
\end{array} \right]^{T}, \ U_{i}=\left[ \begin{array}{c}
6(c_{i}-1)/b_{i}  \\
3(a_{i}+6)/b_{i} 
\end{array} \right]^{T}, \ i\in \mathcal{V}_{2}.
\end{array}%
\right.  \notag 
%\label{Sim4}
\end{equation}

Next, the time-varying output formation shape is described by  $ h_{i1}= 10\sin(\omega_{i}t+(i-1)\pi/3),
h_{i2}= 10\cos(\omega_{i}t+(i-1)\pi/3)$,
which yields the studied system in (\ref{FormationDynamics}) with  
\vspace{-3pt}
\begin{equation}  
\dot{h}_{i}(t)=\left[ \begin{array}{cc}
0 & \omega_{i}  \\
-\omega_{i} & 0     
\end{array} \right]h_{i}(t),  \  y_{hi}(t)=\left[ 
\begin{array}{cc}
1 & 0 \\  
-1 & 1     
\end{array} 
\right] h_{i}(t). 
\label{Fdynamics}
\end{equation}  
 
%It can be verified that $ \lim_{t\rightarrow \infty} h_{i}(t) = \textbf{0}  $ which implies that the six followerss will rotate around the leader which locates in the center of the time-varying shape.  
 
Let $ \omega_{i}=1 $. Then, the solution to the matrix equation in (\ref{FormationRegulatorEquation}) is  
\vspace{-3pt}
\begin{equation}  
\hspace{0.5em}
\left\{
\begin{array}{c}
\hspace{-2.6em}
X_{hi}= \left[ \begin{array}{cc}
1 & 0 \\  
-1 & 1 
\end{array} \right], \ U_{hi}=\left[ \begin{array}{c}
(a_{i}-1)/b_{i}  \\
-(a_{i}+1)/b_{i} 
\end{array} \right]^{T}, \ i\in \mathcal{V}_{1},  \\
\hspace{-0.5em}
X_{hi}= \left[ \begin{array}{ccc}
1 & -1 & -2 \\  
0 & 1 & 0
\end{array} \right]^{T},  U_{hi}=\left[ \begin{array}{c}
(a_{i}+2c_{i})/b_{i}  \\
-(a_{i}+2)/b_{i} 
\end{array} \right]^{T},  i\in \mathcal{V}_{2}.
\end{array}%
\right.  \notag 
%\label{Sim4}
\end{equation}

The unreliable directed communication topologies between the leader and the six followers are shown in Fig. \ref{Graphs},  where only $ \bar{\mathcal{G}}_{1} $ has a directed spanning tree. It can be verified that Assumption \ref{SwitchingGraphs} is satisfied. Besides, the linear multi-agent system suffers from unknown and unbounded FDI sensor attacks, i.e., $ 
y^{c}_{i}(t)=y_{i}(t)+\phi^{a}_{i}y^{a}_{i}(t),  i\in \mathcal{V}. $
%\vspace*{-3pt}
%\begin{equation}
%u^{c}_{i}(t)=u_{i}(t)+\varphi^{a}_{i}u^{a}_{i}(t), \ 
% y^{c}_{i}(t)=y_{i}(t)+\phi^{a}_{i}y^{a}_{i}(t), \ i\in \mathcal{V}. \label{Attk2} 
%\end{equation}  
Next, the control objective is to achieve resilient time-varying output formation tracking for multi-agent systems under FDI sensor attacks and unreliable digraphs.

\begin{figure}[!t]
	\centering
	\includegraphics[width=8.0cm,height=2.2cm]{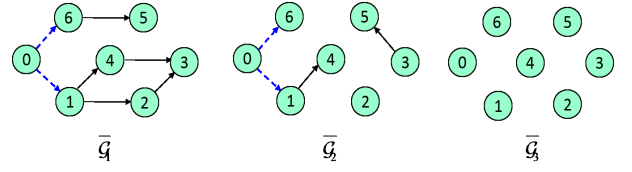}
	\caption{The unreliable communication topologies for a group of seven agents.} 
	\label{Graphs}
\end{figure}

%\vspace{-5pt}
\subsection{Algorithm design and result}
\vspace{-2pt}
The proposed distributed leader estimator in (\ref{DistributedLeaderEstimator}) is performed under unreliable digraphs. We select the leader's initial states as $x_{0}(0)= \text{col}(1,-1) $, and the initial estimated states as $ \zeta_{i}(0)= \text{col}(0.2,-0.2; -0.1,  0;-0.3,0.4; 0.1,-0.1; -0.4, 0.3; - 0.5,0.5) $. Based on certain calculations, one has that $ \mu=1 $, $ \lambda_{m} =0.7262$, $ \sigma_{m}=2.7071 $. Further, the estimator gain matrix can be designed as $ K_{0}=[0.0339 \ 0.0524; \ -0.0175 \ -0.1517] $ by solving this optimization problem in (\ref{optimization})-(\ref{ARE}), given $ R_{0}=I_{2} $ and $ Q_{0}=4I_{2} $. The parameters $ \kappa_{0}=2 $, $ \tau_{a}=1 $ and $ \pi = 0.05 $ are selected to satisfy (\ref{Condition}) in Theorem 1. Then, the proposed reliable distributed leader estimator in (\ref{DistributedLeaderEstimator}) is performed, and the simulation result is depicted in Fig. \ref{LeaderEstimateError}. As can be observed, the leader's state can be estimated for each follower exponentially.

The proposed resilient distributed control algorithms are performed against unbounded sensor attacks and unreliable digraphs. 

\vspace{2pt}
%\begin{example}
\textit{Case 1: Resilient Output Formation Tracking with FDI Sensor Attacks and Unreliable Digraphs}
%\end{example} 
 
\vspace{1pt} 
In this part, these FDI sensor attacks are modeled by  $ y^{a}_{i}(t)= [0.1*i*t; 0.2*i*t]$ shown in Fig. \ref{FDI_Attacks}.  
%$i\in \mathcal{V}$, $t \geq 0 $. 
As can be seen, Assumption \ref{AttacksCondition} is verified. According to Theorem 2, the controller gain matrices are: $ K_{11}=[-8 \ -4]$, $ K_{12}=[-7.5 \ -2.75]$, $K_{13}=[-8 \ -2.3333]$, $K_{14}=[-78.75 \  -26.875 \ -5.5]$, $ K_{15}=[-96 \  -29.4 \ -5.2]$, $ K_{16}=[ -115.5 \  -32.0833 \ -5]$;  $ K_{21}=[2 \ -12]$, $ K_{22}=[4.5 \ -9]$, $K_{23}=[6 \ -8]$, $K_{24}=[50.25 \  -74.25]$, $ K_{25}=[69.6  \  -82.8]$, $ K_{26}=[90.5 \ -91.5]$; and $ K_{31}=[2 \ 4]$, $ K_{32}=[3.5 \ 3]$, $K_{33}=[4.6667 \ 2.6666]$, $K_{34}=[43.5 \  25.75]$, $ K_{35}=[58.8 \  28.4]$, $ K_{36}=[76 \ 31.1666]$. By Proposition \ref{proposition0}, the observer gain 
matrices are: $ L_{1}=[2 \ 1; 0 \ 1] $, $ L_{2}=[2 \ 1; 0 \ 0.5] $, $ L_{3}= [2 \ 1; 0 \ 0] $, 
$ L_{4}=[4.9998 \ 0.9927; -0.331 \ 11.0002; -2.6786 \ 65.504] $, $ L_{5}=  [4.9997 \ 0.9930; -0.4036 \ 12.0003; -3.6811 \ 83.0059] $, $ L_{6}=[4.9995 \ 0.9933; -0.4811 \ 13.0005; -4.8797 \ 102.5079] $; and $ M_{i}=[-1 \ 1.3; -0.2 \ 1] $, $ i\in \mathcal{V}_{1} $, $ M_{i}=[-1 \ 0.3; -0.2 \ -1] $, $ i\in \mathcal{V}_{2} $. The initial states $ x_{i}(0) $ and $ \hat{x}_{i}(0) $ are randomly specified. 
 
Next, the proposed resilient algorithm in (\ref{ResilientDistributedSensorAlgorithm})-(\ref{SensorAdaptive}) with (\ref{SensorCompensationSignal}) is performed, and the simulation results are shown in Figs. \ref{LeaderEstimateError}-\ref{OutputFormationTrackingErrorr}. Fig. \ref{LeaderEstimateError} shows the leader-follower estimated error under unreliable digraphs. The output trajectories of all agents are shown in Fig. \ref{OutputFormation} for four different instants (t=0s, 8s, 15s, 20s). It can be seen in the presence of the FDI sensor attacks and unreliable communication, the six followers can rotate around the leader (black pentagram) that locates in the center of the time-varying formation. Moreover, Fig. \ref{OutputFormationTrackingErrorr} depicts the trajectories of output formation tracking errors $ e_{i}(t)=y_{i}(t)-y_{hi}(t)-y_{0}(t) $. From those figures, it is concluded that the time-varying output formation can be achieved under the FDI attacks and unreliable digraphs.

\begin{figure}[!t]
	\centering
	\hspace{-1.6em}
	\includegraphics[width=9.8cm,height=5.0cm]{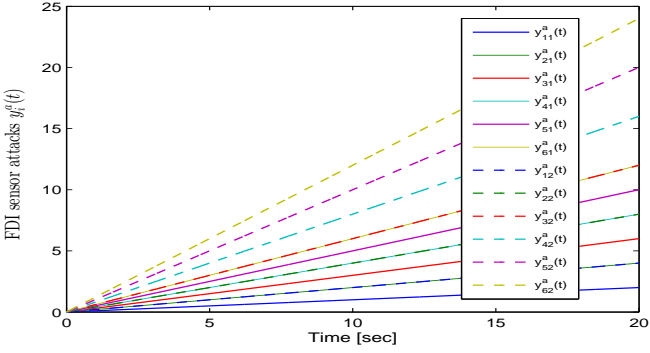}
 	\vspace{-12pt}
	\caption{Simulated FDI sensor attacks $ y^{a}_{i}(t)=\text{col}(y^{a}_{i1}(t),y^{a}_{i2}(t)),  i=1,\cdots,6 $.} 
	\label{FDI_Attacks}
\end{figure}

%\vspace{-10pt}
\begin{figure}[!t]
	%	\centering
	\hspace{-0.6em}
	\includegraphics[width=10cm,height=5.0cm]{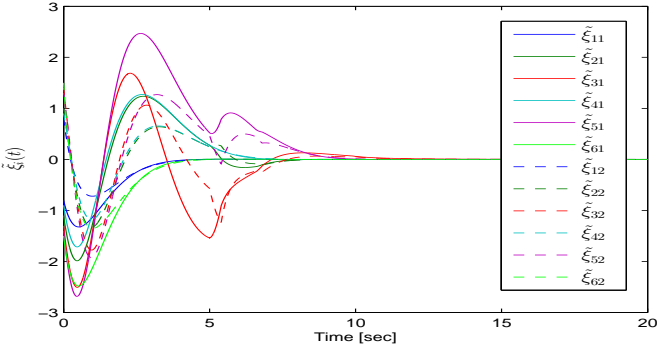}
	\vspace{-12pt}
	\caption{The trajectories of the leader-follower estimated error $ \tilde{\xi}_{i} = \text{col}(\tilde{\xi}_{i1},\tilde{\xi}_{i2})$.}
	\label{LeaderEstimateError}
\end{figure}

\begin{figure}[!t]
%	\centering
\hspace{-1.0em}
\includegraphics[width=10cm,height=7.0cm]{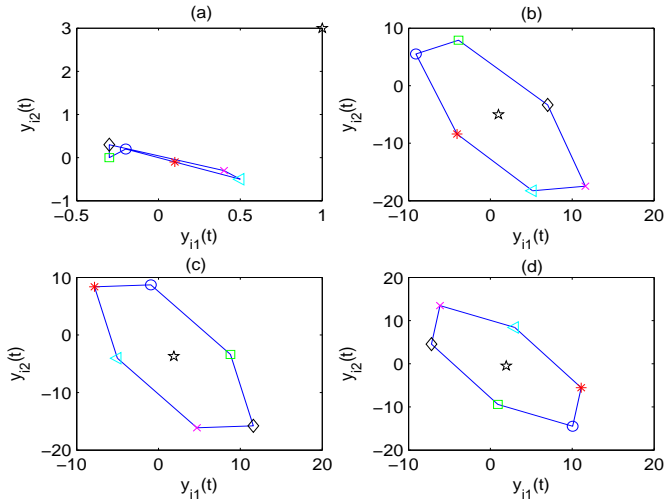}
	\caption{The output trajectories of all agents $ y_{i}=\text{col}(y_{i1},e_{y2})$ under the sensor attacks and unreliable digraphs: (a) t=0s; (b) t=8s; (c) t=15s; and (d) t=20s.}
	\label{OutputFormation}
\end{figure}

To show the resilient performance of the proposed algorithm, we compare our approach with the standard method, i.e., (\ref{ResilientDistributedSensorAlgorithm})-(\ref{SensorAdaptive}) without $ \hat{y}^{a}_{i}(t) $ in (\ref{SensorAdaptive}) and $ f^{a}_{i}(t) $ in (\ref{SensorCompensationSignal}). The performance index is: $ E(t)=\frac{1}{6} \sqrt{ \sum_{i=1}^{6} \|e_{i}\|^{2} } $, which is the average tracking error. The performance comparisons between the standard and the proposed resilient approaches are presented as depicted in Fig. \ref{PerformanceIndex}. As can be observed, the proposed control method enables the zero-error tracking resilience against sensor attacks and unreliable digraphs, while the tracking error diverges under the standard method. 
%which shows the effectiveness of our design.   

 \begin{figure}[!t]
 	%	\centering
 	\hspace{-1.0em}
 	\includegraphics[width=10cm,height=4.6cm]{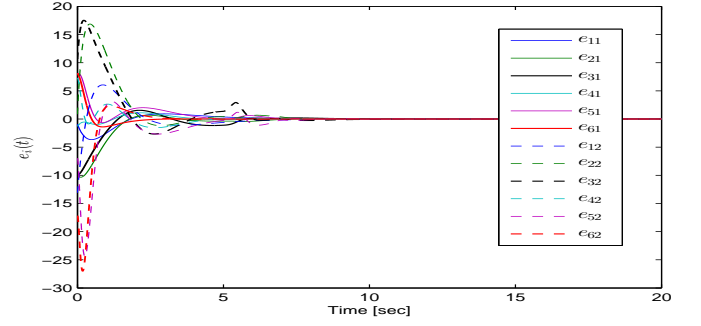}
 	\caption{The trajectories of the resilient time-varying output formation tracking $ e_{i} = \text{col}(e_{i1},e_{i2})$ under the sensor  attacks and unreliable digraphs.}
 	\label{OutputFormationTrackingErrorr}
 \end{figure}

\begin{figure}[!h]
	%	\centering
	\hspace{-1.0em}
	\includegraphics[width=10cm,height=4.6cm]{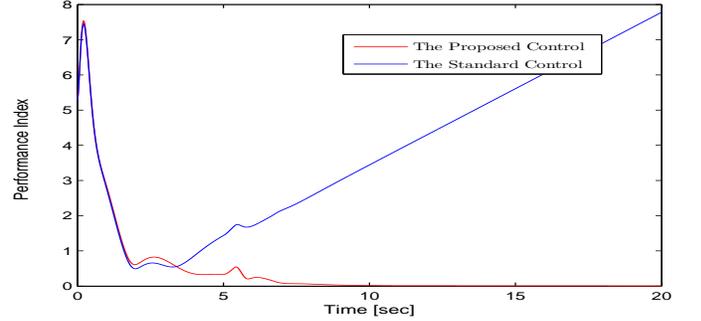}
	\caption{The performance comparison between the standard and proposed methods.}
	\label{PerformanceIndex}
\end{figure}

\textit{Case 2: Resilient Output Containment-Formation Tracking with FDI Sensor Attacks and Unreliable Digraphs}

The dynamics of the three leaders are described by $ \dot{x}_{k}(t)=[1, -3; 2, -1]x_{k}(t),  \  y_{k}(t)= [1, 0;0, -3]x_{k}(t)$,  $ k=7,8,9 $,
%\vspace{-2pt}
%\begin{equation}  \label {SML}
%\dot{x}_{k}(t)=\left[ \begin{array}{cc}
%1 & -3  \\
%2 & -1    
%\end{array} \right]x_{k}(t),  \  y_{k}(t)=\left[ \begin{array}{cc}
%1 & 0 \\  
%0 & -3 
%\end{array} \right]x_{k}(t),   
%\end{equation}
where $ x_{k}(t) $, $ y_{k}(t) $ are the state and output of the $ k $th leader, respectively. We select these three leaders' initial states as $x_{7}(0)= \text{col}(1,0) $, $x_{8}(0)= \text{col}(-1,2) $, $x_{9}(0)= \text{col}(2,-2) $. These unbounded FDI sensor attacks and the initial estimated states are given the same as those in Case 1. The unreliable digraphs between the leaders and the followers are provided in Fig. \ref{GraphsML}. 
%Next, the proposed resilient distributed control algorithm in (\ref{ResilientDistributedSensorAlgorithmML}) together with  (\ref{MultiLeaderDistributedEstimator}) is performed. 
%to achieve resilient time-varying output containment-formation tracking under FDI sensor attacks and unreliable digraphs.  
The simulation results are depicted in Figs. \ref{OutputFormationML}-\ref{OutputContainmentFormationTracking}. The state snapshots of three leaders, six followers and the convex combination of leaders are depicted in Fig. \ref{OutputFormationML} for four different instants (t=0s, t=8s, t=15s and t=20s). It can be observed that 1) the states of the six followers can form a parallel hexagon, while rotating around the states of the three leaders; and 2) the states of these three leaders are time-varying and their convex combination can lie in the center of the parallel hexagon. That is, the desired time-varying output containment-formation tracking with multiple leaders has been achieved under FDI sensor attacks and unreliable digraphs. Fig. \ref{OutputContainmentFormationTracking}(a)-(c) shows the simulated results for $ \tilde{\xi}_{i} $, $ e_{i}(t) =y_{i}(t) -y_{hi}(t) -\sum_{k=7}^{9}y_{k}(t) $, and performance comparison, respectively.

\begin{figure}[!t]
\centering
\includegraphics[width=7.5cm,height=2.0cm]{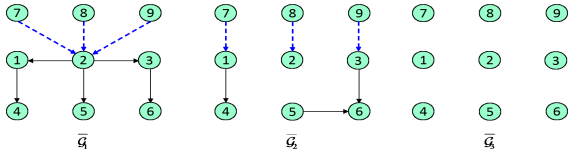}
\caption{The unreliable digraphs for a group of six followers and three leaders.} 
\label{GraphsML}
\end{figure}

\begin{figure}[!t]
	%	\centering
	\hspace{-1.5em}
	\includegraphics[width=10.4cm,height=7.4cm]{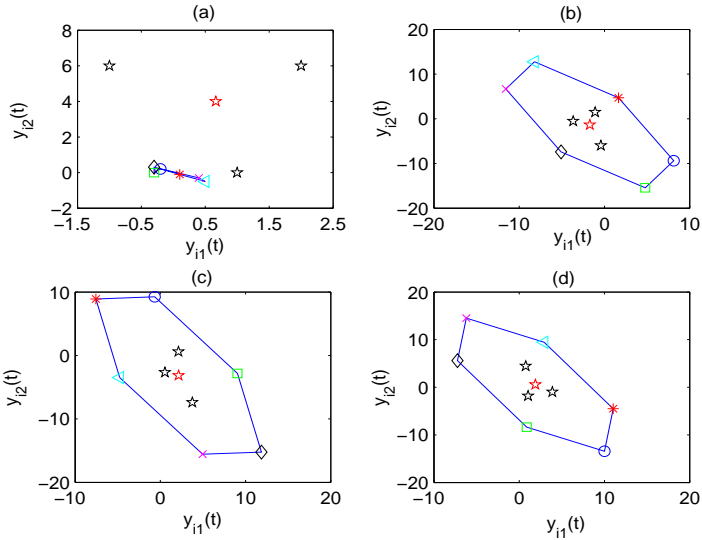}
	\caption{The output trajectories of all agents $ y_{i}=\text{col}(y_{i1},e_{y2})$ in the presence of the FDI sensor attacks and unreliable digraphs under the proposed distributed algorithm in (\ref{ResilientDistributedSensorAlgorithmML}) together with (\ref{MultiLeaderDistributedEstimator}): (a) t=0s; (b) t=8s; (c) t=15s; and (d) t=20s.}
	\label{OutputFormationML}
\end{figure}

\vspace*{-5pt}
\section{Conclusion} 
In this paper, we investigated the resilient time-varying output formation tracking problem of a heterogeneous linear multi-agent system under the unknown and unbounded FDI sensor attacks and unreliable digraphs. The new resilient distributed estimator-based control algorithms have been proposed to guarantee time-varying output formation tracking. 
%A new format of time-varying formation shapes was investigated. 
%Under the proposed resilient distributed algorithms, global exponential convergence is achieved, provided that certain gains of observers and controllers can be properly designed.
Then, the proposed distributed design is extended to achieve time-varying output containment-formation tracking in the presence of  FDI attacks and unreliable digraphs. 
%The effectiveness of distributed algorithms was validated by simulation studies. Further work might consider time-varying output formation tracking of nonlinear multi-agent systems under other types of malicious cyber-attacks. 

\begin{figure}[t!] 
	\centering
	\hspace*{-0.9em}
	\begin{tabular}{cc}			
		\hspace*{-1.5em}		
		\subfloat [$ \tilde{\xi}_{i} $] 
		{\includegraphics[width=4.8cm,height=3.7cm]{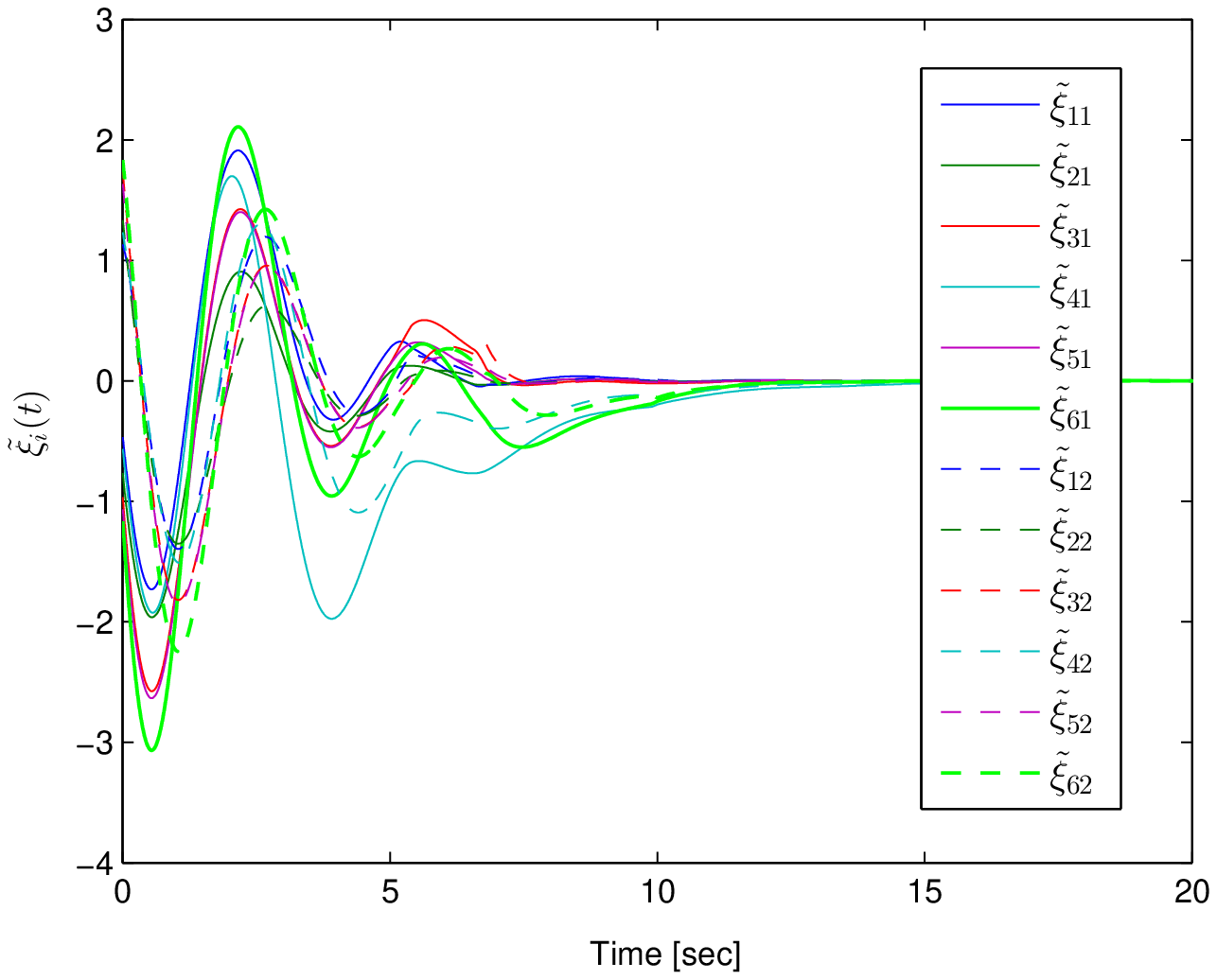}
			\label{l1}}
		
		\hspace*{-1.3em}		
		\subfloat [$ e_{i} $] 
		{\includegraphics[width=4.8cm,height=3.7cm]{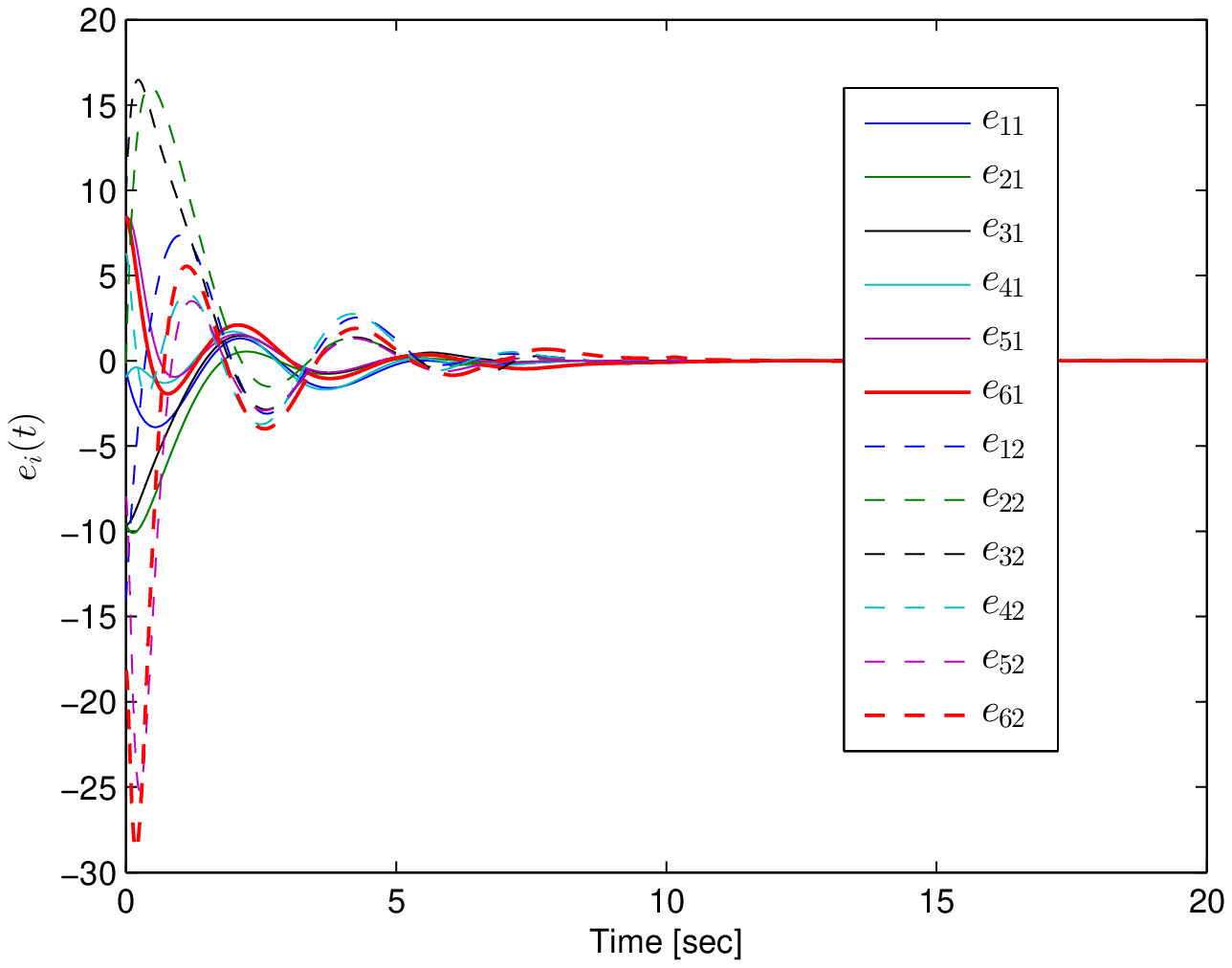}
			\label{l2}} 
		
		\\ 
		
		\hspace*{-1.8em}		
		\subfloat [Performance index] 
		{\includegraphics[width=10.5cm,height=3.7cm]{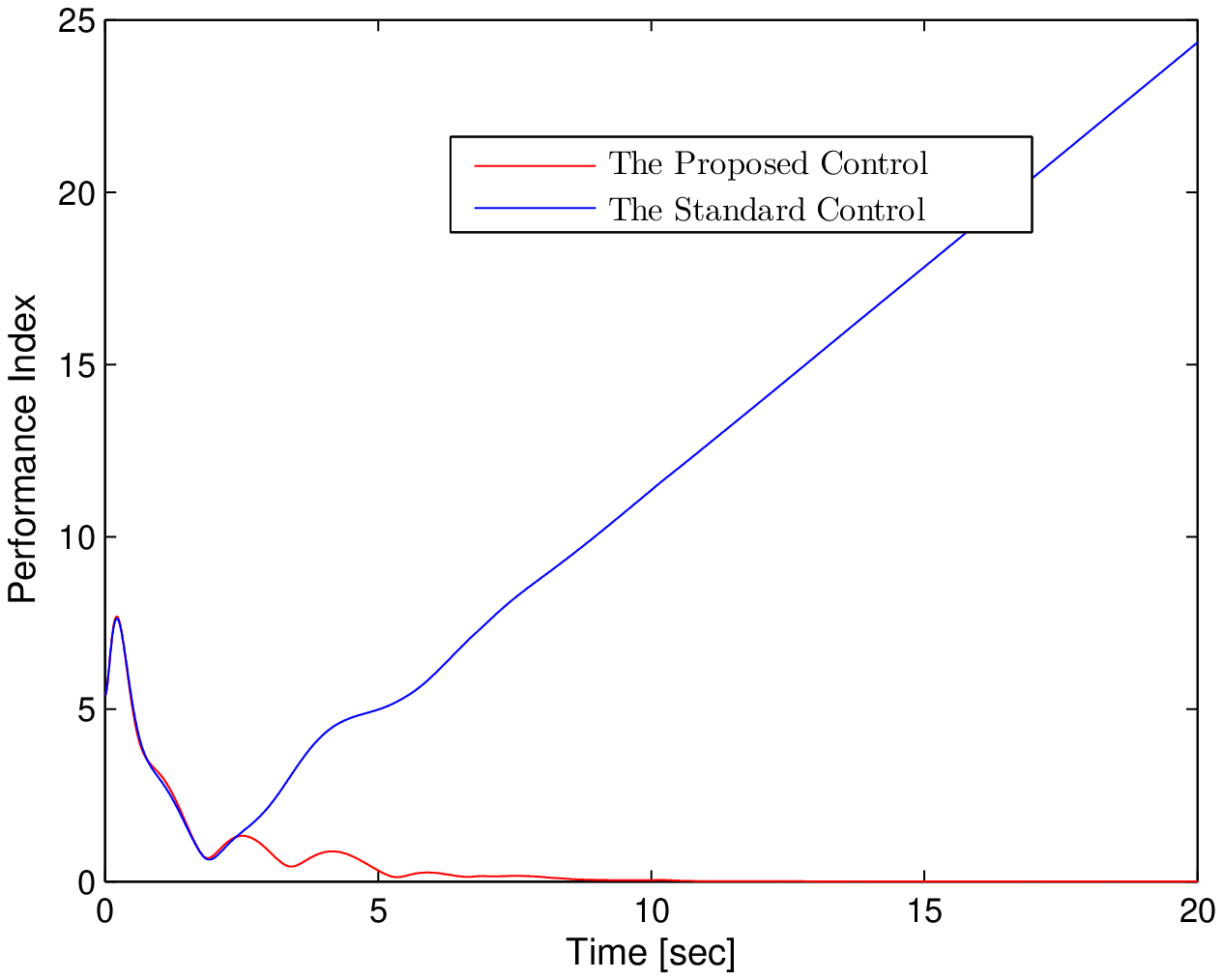}
			\label{l3}}	
	\end{tabular}
	\vspace*{-6pt}
	\caption{Simulated results of time-varying output containment-formation tracking under the proposed distributed algorithm in (\ref{ResilientDistributedSensorAlgorithmML}) together with (\ref{MultiLeaderDistributedEstimator}): (a) estimated tracking error $ \tilde{\xi}_{i} $; (b) time-varying output containment-formation tracking error $ e_{i} $ and (c) performance comparisons between different methods.}
	\label{OutputContainmentFormationTracking}
\end{figure}

%To illustrate the resilient performance of the proposed resilient design, we compare our approach with the standard method, e.g.,  (\ref{ResilientDistributedSensorAlgorithmML}) without $ \hat{y}^{a}_{i}(t) $ in (\ref{SensorAdaptive}) and $ f^{a}_{i}(t) $ in (\ref{SensorCompensationSignal}). The performance index is: $ E(t)=\frac{1}{6} \sqrt{ \sum_{i=1}^{6} \|e_{i}\|^{2} } $. The performance comparison is presented in Fig. \ref{ContainmentPerformanceIndex}. As can be observed, the proposed control method enables the zero-error tracking resilience against sensor attacks and unreliable digraphs, while the containment-formation tracking errors diverge under the standard method.  

%\vspace*{-5pt}

\end{document}